\numberwithin{equation}{section}   
\numberwithin{equation}{section}
\title{Delegation with Costly Inspection}
  \newcommand{\country}[1]{#1.}
  \newcommand{\city}[1]{#1}
  \newcommand{\institution}[1]{#1}
  \newcommand{\email}[1]{Email: \texttt{#1}}
  \newcommand{\affiliation}{\thanks}
\author{
 {Mohammad T. Hajiaghayi
 \affiliation{
   \institution{University of Maryland}
   \city{College Park}
   \country{USA}
 \email{hajiagha@umd.edu}
 }}
 \and
 {Piotr Krysta
 \affiliation{
   \institution{Augusta University}
   \city{Augusta}
   \country{USA. Also affiliated with Computer Science Department, University of Liverpool, U.K}
 \email{pkrysta@augusta.edu}
 }}
 \and
 {Mohammad Mahdavi
 \affiliation{
   \institution{University of Maryland}
   \city{College Park}
   \country{USA}
 \email{mahdavi@umd.edu}
 }}
 \and
 \and
 {Suho Shin
 \affiliation{
   \institution{University of Maryland}
   \city{College Park}
   \country{USA}
 \email{suhoshin@umd.edu}
 }}
}
\begin{document}
\date{}
\maketitle
\begin{abstract}
We study the problem of delegated choice with inspection cost (DCIC), which is a variant of the delegated choice problem by Kleinberg and Kleinberg (EC'18) as well as an extension of the Pandora's box problem with nonobligatory inspection (PNOI) by Doval (JET'18).
In our model, an agent may strategically misreport the proposed element's utility, unlike the standard delegated choice problem which assumes that the agent truthfully reports the utility for the proposed alternative.
Thus, the principal needs to inspect the proposed element possibly along with other alternatives to maximize its own utility, given an exogenous cost of inspecting each element.
Further, the delegation itself incurs a fixed cost, thus the principal can decide whether to delegate or not and inspect by herself.

We show that DCIC indeed is a generalization of PNOI where the side information from a strategic agent is available at certain cost, implying its NP-hardness by Fu, Li, and Liu (STOC'23).
In fact, we observe that DCIC becomes significantly more challenging than PNOI and the delegated choice problem in several aspects.
First, neither of (i) running PNOI policy without delegation nor (ii) running a simple delegation mechanism can achieve a constant approximation to the optimal mechanism for DCIC, implying that we need to use both the delegation and inspection for efficient mechanisms.
Furthermore, the standard approaches to PNOI (or Pandora's box problem) for upper bounding the optimal policy in a structured way to obtain algorithms do not easily extend to DCIC. 

Nevertheless, we provide constant approximate mechanisms for DCIC problem.
En route to this result, we first consider a costless delegation setting in which the cost of delegation is free.
We prove that the maximal mechanism over the pure delegation with a single inspection and an PNOI policy without delegation achieves a $3$-approximation for DCIC with costless delegation, which is further proven to be tight.
These results hold even when the cost comes from an arbitrary monotone set function, and can be improved to a $2$-approximation if the cost of inspection is the same for every element.
We extend these techniques by presenting a constant factor approximate mechanism for the general setting for rich class of instances.
    


    
\end{abstract}


\section{Introduction}

Delegation is one of the most fundamental economic applications and frameworks, describing the process of delegating a decision making, search, or optimization task to an expert with information benefit. 
Indeed, a significant number of real-world scenarios can be seen as acts of delegation.
For instance, the government delegates the role of producing an item and trade to revenue maximizing intermediary in revenue maximization for auctions~\citep{manelli2007multidimensional,li2013revenue,daskalakis2013mechanism} and mediated bilateral trade~\citep{vcopivc2008robust,eilat2021bilateral,kuang2023profit,hajiaghayi2024gains}. Online content platforms delegate the role of contents generation to the creators~\citep{yao2024rethinking,shin2022multi,dai2024can,jagadeesan2024supply}, and even users delegate information seeking procedure to generative AI~\citep{immorlica2024generative}.

The delegated choice problem, introduced by~\cite{holmstrom1980theory,armstrong2010model}, is one of the most fundamental settings in delegation literature to study the loss of moral hazard arising from information asymmetry in a principal-agent setting.
Formally, there exists a set of $n$ alternatives, and the decision maker (principal) wants to commit to an alternative, but has no information on the ex-post quality of them.
Instead, the principal delegates the decision making process to an expert (agent) who can observe the values of the alternatives.
The agent, however, realizes a misaligned utility for the decision making process, having an incentive to mislead the outcome to maximize his own utility, rather than the principal's.
To prevent this, the principal commits to a screening mechanism that screens out an alternative once the utility of the alternative that the agent has submitted is smaller than a predetermined threshold, using the distributional information on each alternative's utility.
\cite{kleinberg2018delegated} reveal an interesting connection between threshold-based screening mechanisms and non-adaptive algorithms for the prophet inequality~\citep{krengel1987prophet,samuel1984comparison}.

The fundamental assumption in the standard delegated choice problem is that the agent truthfully reports the corresponding utility of the proposed alternative.
For instance, this could be an easily verifiable lie once the decision making process is over and the alternative is implemented, or it may backfire on the agent's reputation status.
However, in many real-world applications, the cost of verifying the utility of the outcome might be expensive, \eg delegated decision of a governmental policy, or if the delegation happens only as a one-off interaction, so that the agent might have incentive to misreport the utility of the proposed alternative.
In this context, we consider a variant of the delegated choice problem, termed \emph{delegated choice with inspection cost} (DCIC), where the principal can further inspect the alternative at a certain cost upon the agent's signal and the agent might not truthfully report the utility of the proposed alternative.

We first observe that DCIC is a strict generalization of the recently introduced variant of the Weitzman's box problem by~\cite{weitzman1978optimal}, called Pandora's box with nonobligatory inspection (PNOI)~\citep{doval2018whether,beyhaghi2019pandora}.
In the PNOI problem, there exists a set of boxes (alternatives) with a private reward for each box, the distribution of which is publicly available to the searcher.
The searcher wants to commit to one of the boxes to maximize the reward.
To this end, the searcher sequentially either (i) opens a box $i$ at fixed cost $c_i$ to observe the reward or (ii) selects a box that might be opened or closed, and finishes the searching process.
Different from the standard Pandora's box problem with obligatory inspection which has a simple optimal index-based policy, PNOI problem is proven to be NP-hard~\cite{fu2023pandora} even if the support of each of the reward distributions is of cardinality three. 
Indeed, we observe that the PNOI problem reduces to an instance of DCIC problem where the cost of delegation is significantly larger than values obtainable from any alternative, which implies NP-hardness of the DCIC problem.

In this paper, we formally introduce and study the DCIC problem as extensions of both the delegated choice problem and PNOI problem, and provide the first constant approximate mechanisms for several versions.
We first show that neither pure delegation policy nor pure PNOI policy can achieve a constant approximation to the optimal policy, implying that a combination of both the delegation and inspection is necessary to achieve constant approximation.
Correspondingly, we consider a setting with costless delegation where the cost of the delegation is zero, and obtain a $3$-approximate mechanism, and further prove that this is essentially tight.
We show that this result extends to a combinatorial cost setting~\cite{berger2023pandora}, where the cost function is an arbitrary monotone set function.
We finally obtain a constant factor approximation mechanism for the general setting of DCIC for rich class of instances.

As a conceptual contribution, our work connects the Pandora's box problem and the delegated choice problem, complementing the intimate connection between the delegated choice problem and prophet inequality by~\cite{kleinberg2018delegated}, one of the most central problems in online algorithms.
Furthermore, to the best of our knowledge, our work initiates the first study of Pandora's box with side information given by the strategic agent's signal, which might be of independent interest.

We first briefly summarize our technical results in Section~\ref{sec:sum-results} along with a concise version of problem setup.
Further related works can be found in Section~\ref{sec:related}, and formal problem setup is detailed in Section~\ref{sec:model}.
Our main results are presented in Section~\ref{sec:results}, where we defer the proofs of the propositions to the appendix due to page limit.

\section{Our results and techniques}\label{sec:sum-results}
We introduce here our main results and ideas for their proofs.
To this end, we briefly introduce our problem setup. Formal model description can be found in Section~\ref{sec:model}.
In the DCIC problem, we have a set of alternatives\footnote{We often refer to an alternative as a solution, an element, or a box (as in the Pandora's box problem).} $[n] = \{1,2,\ldots, n\}$,  and the principal needs to select one of them.
Each alternative $i$ is equipped with independent nonnegative random variables $X_i$ and $Y_i$, denoting the principal's and the agent's utilities, respectively.
Let $D^X_i$ and $D^Y_i$ denote the distribution associated with $X_i$ and $Y_i$, respectively, and $D^X$ be the product distribution of $(X_i)_{i \in [n]}$, and we similarly define $D^Y$.
Let $\cD^X$ and $\cD^Y$ be the respective families of all possible such distributions over $\R_{\ge 0}^n$.
Each alternative $i$ is also equipped with a deterministic inspection cost $c_i$.
The principal is aware of the distribution $D^X_{i}$ and the cost $c_i$ for $i \in [n]$.
At each time, the principal can decide either to inspect an uninspected alternative, select an alternative and realize its utility, or stop the search.
If the principal inspects a set of alternatives $I$ and commits to alternative $k \in [n]$, the principal realizes the utility of $X_k - \sum_{i \in I}  c_i$.
Let $\cP$ be a set of all possible search (inspection) policies.

Without the presence of the agent, the principal's search procedure to maximize the utility can be exactly framed as the Pandora's box problem with nonobligatory inspection (PNOI)~\cite{doval2018whether}.
On the other hand, the principal can consult an agent who can observe the realization of every alternative.
Importantly, we assume that the principal needs to pay the exogenous cost of $c_{\Del}$ to delegate to the agent and receive the feedback.
The agent, however, realizes the utility of $Y_k$ if the principal decides to select alternative $k$, \ie having a misaligned utility.
Let $\Sigma$ be the set of all possible signals by the agent; \eg in~\cite{kleinberg2018delegated}, this might be a pair of a proposed alternative and a corresponding utility.
Then, given the principal's committed mechanism, the agent best-responds by sending a signal $\sigma \in \Sigma$ that maximizes the agent's utility.
Overall, the game proceeds as follows:
\begin{enumerate}
    \item The principal commits to a mechanism $\Mec: \Sigma \to \cP$ that maps the agent's signal to an inspection policy.
    \item The agent observes the realization of $(X_i,Y_i)_{i \in [n]}$ and best-responds with signal $\sigma$ given $\Mec$.
    \item The alternative (including null) is determined by the outcome of the search policy $\Mec(\sigma)$.
\end{enumerate}

We are interested in designing an \emph{agent-oblivious} mechanism that has no information about the agent's utility distributions $D^Y$.
Formally, the principal's expected utility for mechanism $\Mec$ can be written as
\begin{align*}
    \Ex{\Mec} = \Ex{\parans{\sum_{i \in [n]} \bA_i X_i - \bI_i c_i} - \bI_{\Del} c_{\Del}},
\end{align*}
where $\bA_i$ is the indicator random variable denoting whether $i$ is selected, $\bI_i$ denotes an indicator variable of whether $i$ is inspected, and $\bI_{\Del}$ indicates whether the principal decides to delegate or not.
Furthermore, the worst-case utility of the principal for an oblivious mechanism $\Mec$ is defined over the worst-case possible agent's distribution, written as
\begin{align*}
    \Exu{D_X}{\Mec} = \min_{D_Y \in \cD_Y}\Exu{D_X,D_Y}{\parans{\sum_{i \in [n]} \bA_i X_i - \bI_i c_i} - \bI_{\Del} c_{\Del}}.
\end{align*}
Given an optimal mechanism $\Opt$ that maximizes the above quantity over the space of feasible mechanisms, $\Mec$ is an $\alpha$-approximation if
\begin{align*}
    \alpha \cdot \min_{D_X \in \cD_X}\frac{\Exu{D_X}{\Mec}}{\Exu{D_X}{\Opt}} \ge 1,
\end{align*}
\ie the worst-case ratio between the optimal oblivious mechanism and $\Mec$. 

As discussed, DCIC can be viewed as a version of PNOI problem where the side information from a strategic agent is possible at certain costs.
On one hand, one can regard such information from the agent as another box that changes the distribution of the other boxes.
On the other hand, the correlation structure on how the agent's signal changes the other boxes' distributions is very unclear, which makes our problem significantly harder and fundamentally different from the Pandora's box problem with correlated rewards.

Before presenting our main results, it is straightforward to see that DCIC problem is NP-hard, 
since the PNOI problem reduces to DCIC by setting $c_{\Del}$ sufficiently large.
\begin{proposition}
    DCIC problem is NP-hard.
\end{proposition}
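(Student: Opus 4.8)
The plan is to make rigorous the reduction from PNOI that the text has already sketched. By~\cite{fu2023pandora}, the PNOI problem is NP-hard, in fact even when each reward distribution $D_i$ is supported on at most three points; so it suffices to give a polynomial-time, value-preserving reduction from an arbitrary PNOI instance to a DCIC instance whose optimal mechanism never delegates. Given a PNOI instance with boxes $[n]$, reward distributions $(D_i)_{i\in[n]}$, and opening costs $(c_i)_{i\in[n]}$, I would build the DCIC instance on the same alternatives $[n]$, set $D^X_i := D_i$ and keep the inspection costs $c_i$, and take the agent's distributions $D^Y_i$ to be arbitrary (e.g.\ identically $0$) since the benchmark mechanism is agent-oblivious. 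The one design choice is the delegation cost: set $c_{\Del}$ to be any value strictly larger than $v_{\max} := \max_{i\in[n]}\max(\operatorname{supp}(D_i))$, which is part of the input encoding and hence polynomially bounded. The construction is clearly polynomial time.

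The core of the argument is that no delegating mechanism can be optimal on this instance. Whenever $\bI_{\Del}=1$ the principal pays $c_{\Del}$ up front, while the reward $X_k$ of whatever alternative $k$ is eventually selected satisfies $X_k \le v_{\max} < c_{\Del}$ and the inspection costs only subtract more; hence every delegating mechanism has strictly negative (worst-case) utility. Conversely, the principal can always secure nonnegative utility without delegating --- for instance by inspecting nothing and stopping, which yields exactly $0$, or by running the optimal PNOI policy, whose value is nonnegative. Therefore the optimal DCIC mechanism $\Opt$ has $\bI_{\Del}=0$ with probability one, and conditioned on not delegating the principal's feasible actions (sequentially open an uninspected alternative $i$ at cost $c_i$, commit to an inspected or uninspected alternative and collect its reward, or stop) together with the objective $\sum_i \bA_i X_i - \sum_i \bI_i c_i$ coincide exactly with those of the PNOI instance. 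Hence the optimal DCIC value equals the optimal PNOI value, and any polynomial-time algorithm computing or deciding the former solves the latter; since PNOI is NP-hard, so is DCIC.

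There is no serious obstacle here; the only point needing a line of care is the bookkeeping around the strategic agent and the worst-case-over-$D^Y$ definition. Because delegating is dominated for \emph{every} realization and \emph{every} choice of $D^Y$, the agent's signal is simply never solicited at the optimum, so the $\min_{D^Y\in\cD^Y}$ in the objective is vacuous on $\Opt$ and the equivalence with PNOI is exact rather than merely approximate. (As a bonus, since the reduction is exactly value-preserving, it transfers any hardness-of-approximation result known for PNOI, not just NP-hardness.)
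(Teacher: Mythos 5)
Your argument is correct and is precisely the reduction the paper sketches: set $c_{\Del}$ strictly above the maximum possible reward so that any delegating mechanism has negative utility and is dominated by not delegating, making the DCIC instance value-equivalent to the original PNOI instance, whose NP-hardness (from \citealp{fu2023pandora}) then transfers. You have merely filled in the details the paper leaves implicit (the choice $c_{\Del} > v_{\max}$, the observation that the $\min_{D^Y}$ is vacuous once delegation is excluded), so the two proofs take the same route.
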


\subsection{First attempt: PNOI policy or SPMI}
A natural attempt that one might try first might be to see if either of the delegation-only (without further inspection on elements not proposed by the agent) mechanism or inspection-only (without delegation), \ie a PNOI policy, mechanism is sufficient to achieve any constant approximation.

We observe that the inspection-only mechanism cannot achieve constant approximation as follows, which necessitates the a smarter mechanism beyond the delegation.

\begin{customprop}{\ref{prop:no-del-fails}}
    There exists a problem instance in which the optimal direct inspection policy cannot achieve a constant approximation ratio.
\end{customprop}
Intuitively, one can consider $n$ identical boxes with the same distribution supported on two values $v > 0$ and $0$.
For the inspection-only policy without delegation, the decision maker only needs to determine how many alternatives to inspect before it observes $v$ due to the symmetry of the alternatives.
On the other hand, one may set a simple threshold-based delegation mechanism for which the agent is required to submit a single alternative and it is accepted if and only if it induces the principal's utility at least the specified threshold.
If we appropriately set the probability for $v$, there exists only one alternative realizing $v$ with high probability.
In this case, the inspection-only policy necessarily suffers significant amount of inspection costs, whereas the delegation mechanism, with any nonzero threshold would enjoy the utility of $v$ with just a single inspection.

For the delegation-only mechanism, we consider the single-proposal mechanism (SPM) proposed by~\cite{kleinberg2018delegated}, which achieves $2$-approximation to the first best outcome $\Ex{\max_{i \in [n]} X_i}$ under the assumption with truthful report.
Since the original SPM does not inspect the proposed solution, which incentivizes the misreport of the utility of the proposed alternative by the agent, we consider a natural variant of SPM, dubbed \emph{single-proposal mechanism with inspection} (SPMI), that further inspects the proposed alternative.
Further, the following result shows that the single-proposal mechanism fails to achieve any constant approximation ratio as well.
\begin{customprop}{\ref{prop:spm-fails}}
    There exists a problem instance in which the SPMI cannot be $\alpha$-approximate for any $\alpha > 0$.
\end{customprop}
The result follows from the intuition that if the inspection cost is sufficiently large, then selecting an alternative without any inspection dominates the SPMI.

\subsection{Next attempt: using ideas of~\cite{kleinberg2016descending,beyhaghi2019pandora}}
On the other hand, one may wonder whether we could smartly upper bound the optimal mechanism in an efficient manner, to build our intuition towards an efficient mechanism.
Indeed, such approaches are quite common in the original Pandora's box problem~\cite{weitzman1978optimal}, and even for its variant~\cite{beyhaghi2019pandora}.
These techniques, however, turn out to not easily extend to DCIC, as will be shown in what follows.

Let us first recap the standard approaches.
In the standard Pandora's box problem, the principal's utility using the optimal policy can be upper bounded by a tuple of random variables $(\kappa_i)_{i \in [n]}$ defined by $\kappa_i = \min(X_i, \sigma_i)$ where $\sigma_i$ satisfies
\begin{align*}
    \Ex{(X_i - \sigma_i)^+} = c_i.
\end{align*}
The quantity $\sigma_i$ is usually called \emph{cap}.
Then, \cite{kleinberg2016descending} proved that 
\begin{align}
    \Ex{\bA_i X_i - \bI_i c_i} \le \Ex{\bA_i \kappa_i}.\label{eq:beyhaghi}
\end{align}
The proof follows from decomposing $X_i$ into $\kappa_i + b_i$ such that
\begin{enumerate}
    \item If $X_i > \sigma_i$, then \emph{capped value} $\kappa_i = \sigma_i$ and \emph{bonus} $b_i = X_i - \sigma_i$.
    \item Otherwise, $\kappa_i = X_i$, \ie capped value is claimed.
\end{enumerate}
Then, one can easily see that the construction of the bonus $b_i$ satisfies $c_i = \Ex{b_i}$, \ie the expected amount of bonus marginalizes the cost of the inspection.
Thus, a net utility from each box can be viewed as exactly the capped value.

Correspondingly, this inspires a very natural greedy-like search policy to match the described upper bound: opens the boxes in a decreasing order of $\sigma_i$, and stops when the largest observed $X_i$ exceeds the highest remaining cap $\sigma_j$ and select such $i$. This policy is shown to be exactly the optimal and achieving the above upper bound.

For the PNOI problem, \cite{beyhaghi2019pandora} analogously follows this intuition to upper bound the optimal mechanism's utility in a structured way.
Indeed, one can modify the random variable $\kappa_i$ to $\tilde{\kappa}_i$ so that $\tilde{\kappa}_i = \kappa_i$ if $i$ is inspected and $\tilde{\kappa}_i = \Exu{v_i \sim D^X_i}{v_i}$ if $i$ is not inspected, \ie
\begin{align*}
    \tilde{\kappa}_i = 
    \begin{cases}
        \kappa_i &\mbox{ if } \bI_i = 1 \\
        \Ex{X_i} &\mbox{ if } \bI_i = 0
    \end{cases}
\end{align*}
and maintain the upper bound by replacing $\kappa_i$ into $\tilde{\kappa}_i$, \ie $\Ex{\bA_i X_i - \bI_i c_i} \le \Ex{\bA_i \tilde{\kappa}_i}$.
Intuitively, such an upper bound easily extends from the standard setting since the reward of an uninspected box can be simply upper bounded by the expected value of its reward. Given this, a simple upper bound for the utility of the optimal PNOI algorithm can be found based on the solution to Pandora's box problem and the fact that the maximum utility gained from choosing uninspected solutions, $E[\sum_iX_i\bA_i\cdot\Ind{\bI_i = 0}]$, is at most $\max_{i \in [n]} \Ex{X_i}$.

On the other hand, in our setting, even the uninspected solutions' utility can be significantly larger than $\Ex{X_i}$ depending on the principal's mechanism and the information gained from the agent's signal, formally stated as follows.
\begin{customprop}{\ref{prop:info}}
    There exists a problem instance where the ratio of utility gained from uninspected solutions for some mechanisms and corresponding agent's best-response signal, \ie $E[\sum_iX_i\bA_i\cdot\Ind{\bI_i = 0}]$ to the maximum expected utility of any solution $\max_{i \in [n]} \Ex{X_i}$ is $\Omega(n)$.
\end{customprop}
Thus, it is highly unclear how to upper bound the optimal mechanism's utility for DCIC unlike the PNOI problem or the standard Pandora's box problem.

\subsection{Costless delegation}
We will now outline our approach that will lead to constant approximate mechanisms for DCIC.
Recall that in previous attempts, we observe that it is crucial to (i) obtain a reasonable upper bound on the optimal mechanism, and (ii) accordingly design an efficient mechanism that smartly exploits both the delegation and further inspection.
To this end, we first consider a simpler case where the cost of delegation is zero, which will serve as a stepping stone in the general setting in Section~\ref{sec:con-costly}.

We first formally provide our main result in the costless delegation setting.
\begin{customthm}{\ref{thm:costless-3pax}}\label{thm:intro-costless}
    There exists a mechanism that achieves an approximation ratio of 3 for delegated choice with inspection costs problem if the delegation cost is zero.
\end{customthm}
Let us briefly summarize our technical steps to prove this theorem. Clearly, the principal is always not worse off by choosing to delegate and receive the agent's signal since she can simply ignore the signal in any case.

Thus, we first aim to establish an upper bound for the utility obtained from an optimal mechanism.
To this end, we introduce a notion of \emph{overinspection}.
Formally, an alternative $i$ is selected without overinspection if $$\bA_i = 1, \quad \bU_i=\bigvee_{j:c_j \geq c_i} \bI_j = 0,$$ \ie any alternative with the same or higher inspection cost than $c_i$ has not been inspected.
Then, we can effectively prove that there exists an agent such that the expected utility from selecting a solution without overinspection is upper bounded by the expected value, \ie
\begin{align}
    \sum_{i \in [n]}\Ex{X_i \bA_i \cdot \Ind{\bU_i = 0}} \le \max_{i \in [n]}\Ex{X_i}.\label{eq:lemma3-4}
\end{align}
At the heart of this proof is to construct a best-response from an agent such that if a solution is chosen without overinspection, it is always the lowest-cost solution that can be selected without overinspection given the principal's mechanism.
Note that~\eqref{eq:lemma3-4} exhibits a similar structure to~\eqref{eq:beyhaghi}, but has subtle differences on both sides which allow us to further upper bound the utility of any policy.

Accordingly, we provide an upper bound on the utility obtained without overinspection from this best-response by considering another response that leads to the selection of a non-overinspected solution whenever possible. Finally, we argue that whether a solution is chosen without overinspection given this response is dependent only on the utility of alternatives with strictly lower inspection costs, and therefore independent of the utility of the selected alternative itself. This independence shows that we essentially gain no information on the utility of the selected alternative and the expected utility gained will match the alternative's ex-ante utility $\Ex{X_i}$.

Combining these ideas, in Lemma~\ref{lm:opt-fullub}, we can further upper bound the utility obtained by any mechanism with the following quantity:
\begin{align}
    \Ex{\Opt} \le \max_{i\in[n]} \Ex{X_i} + \Ex{\max_{i\in[n]} (X_i - c_i)^+},\label{eq:11031710}
\end{align}
where $\Opt$ is an arbitrary (optimal) mechanism's expected utility.

Given the upper bound on any feasible mechanism, it remains to derive a mechanism that is a constant approximate to the quantity in~\eqref{eq:11031710}.
We consider a simple form of \emph{maximal mechanism} that takes the one with the higher expected utility from two mechanisms.
Those two mechanisms are (i) $\Mec_1$: the one that uses the SPMI, and (ii) $\Mec_2$: the one that simply selects the solution with the largest expected utility without any inspection.

We first prove that $\Mec_1$ achieves at least half of the last term in the right hand side of~\eqref{eq:11031710} in Lemma~\ref{lm:delegation}.
Then, we can prove that the maximal mechanism described above achieves the maximum between the following two quantities:
\begin{align*}
    \Ex{\Mec} \ge \max \left( \max_{i \in [n]}\Ex{X_i}, \frac{1}{2} \Ex{\max_{i \in [n]} (X_i - c_i)^+}\right)
\end{align*}
Combining the results, we obtain the $3$-approximate mechanism in Theorem~\ref{thm:intro-costless}.
Further, this bound is shown to be tight for our mechanism, \ie the mechanism above is exactly $3$-approximation, as stated in Theorem~\ref{thm:tightness}.

Interestingly, we prove that our analysis seamlessly carries over to the \emph{combinatorial cost} setting, where there exists a monotone cost (set) function $c: 2^{[n]} \to \R_{\ge 0}$ that maps a set of selected elements to the overall cost that might not be additive over each item~\cite{berger2023pandora}.
This is formally stated in Corollary~\ref{cor:combinatorial}.

Further, if the cost of the delegation is the same, \ie $c_i = c_j$ for $i,j \in [n]$, we improve the approximation ratio to be $2$ in Proposition~\ref{prop:identical}.
This also holds in the combinatorial cost setting, if  $c(\{i\}) = c(\{j\})$ for every $i,j \in [n]$.


\subsection{Costly delegation}\label{sec:con-costly}
In the costly delegation setting, we prove the following theorem:
\begin{customthm}{\ref{thm:main}}
    Let $\Mec_P$ be the optimal mechanism for the PNOI problem. If $c_{\Del} = \alpha \Ex{\max_{i \in[n]} (X_i -c_i)^+}$ for some $\alpha < 1/2$, then there exists a $(3-4\alpha)/(1-2\alpha)$ approximate mechanism.
    If $c_{\Del} \ge \Ex{\max_{i \in [n]} (X_i - c_i)^+} - 2\Ex{\Mec_P}$, then there exists a $2-\eps$ approximate mechanism for any $\eps > 0$.
\end{customthm}
This theorem essentially characterizes the constant factor approximability when the cost of delegation is relatively large or small.
One immediate corollary of this theorem is that if $\Ex{\Mec}  \ge (1+\alpha)/2 \cdot \Ex{\max_{i \in [n]} (X_i -c_i)^+}$ for some constant $\alpha$, then there exists a constant factor approximate mechanism regardless of the delegation cost.

The assumption on the delegation cost is innocuous in a sense that DCIC still remains NP-hard under such assumptions, since the second case of Theorem~\ref{thm:main} allows us to increasing $c_{\Del}$ as desired.
The case where the cost of delegation is moderately large remains a major open problem.

In its proof, we consider a maximal mechanism that either runs (i) PTAS for the PNOI problem by~\cite{beyhaghi2023pandora} and~\cite{fu2023pandora} or (ii) the SPMI that only delegates and inspects the (nonnull) proposed element.
The proof begins with decomposing the optimal mechanism's utility similar to the costless setting, and exploits the $(1-\eps)$ approximability of PTAS and the imposed assumption in a careful manner.
Finally, we provide an improved $2$-approximate mechanism when the cost of inspection $c_i$ is the same for every $i\in [n]$.

\subsection{Related Works}\label{sec:related}

\cite{holmstrom1980theory} initiates a substantial line of research on delegation mechanisms without transfers. They examined a setting where a principal delegates an optimization problem over a compact interval to a single agent and established conditions for the existence of an optimal mechanism for the principal. \cite{alonso2008optimal} expand on these findings by offering a broader characterization of the optimal mechanism.
The discrete choice setting is first explored by \cite{armstrong2010model} in which an agent samples potential solutions from a distribution and optimizes over this set, providing a setup closely aligned with our problem.

Recently, \cite{kleinberg2018delegated} identify its novel connection to the prophet inequality, and analyze several approximation factor to the first-best outcome.
We note here that the second model of~\cite{kleinberg2018delegated}, termed delegated search problem, also reveals its connection to the Pandora's box problem.
However, their model is significantly different from ours since in their model, since the agent suffers the search cost to examine the alternative assuming that the agent truthfully reports the utility afterwards, whereas in our model the principal handles the search cost and the agent also might strategically misreport the utility.
\cite{khodabakhsh2024simple} also tackle the assumption of truthful reports of the proposed option, or alternatively easily verifiable utility of the proposed options, with a stylized model on the principal's and the agent's utility.
However, they consider a slightly different model under which the principal can only rule each option completely in or out based on the distributional knowledge, without relying on additional device such as inspection.
Interestingly, they also show that the approximation to the first-best benchmark is not possible similar to ours, and thus focus on second-best benchmark.
They show that the optimal mechanism is NP-hard to compute, and provide $1/3$-approximate deterministic mechanism.
A number of papers~\citep{bechtel2020delegated,bechtel2022delegated,shin2023delegating,hajiaghayi2024regret} study several other variants of the delegated choice problem from algorithmic perspective, but we will omit the details due to significant difference.

The Pandora's box problem, initiated by~\cite{weitzman1978optimal}, has been also a central problem in theoretical computer science, grounded as one of the most fundamental online algorithm problem.
The optimal policy is surprisingly simple to have an index-based nature.
The most closely related to ours is the variant with nonobligatory inspection introduced by~\cite{doval2018whether},\footnote{In fact, \cite{guha2008information} firstly study the setup of the Pandora's box with nonobligatory inspection (with discrete reward distribution) in the context of multi channel wireless networks, yet \cite{doval2018whether} firstly formalizes the problem as a variant of the Pandora's box problem.} where the searcher can select a box without necessarily opening it.
\cite{beyhaghi2019pandora} propose a class of policy defined as committing policy, and prove that the optimal committing policy achieves $1-1/e$ approximation to the optimal policy.
Simultaneous works by~\cite{beyhaghi2023pandora} and~\cite{fu2023pandora} prove that the nonobliatory inspection problem is NP-complete, and provide a PTAS algorithm based on the framework by~\cite{fu2018ptas}.
Another line of works study several other variants of the standard Pandora's box problem~\citep{singla2018price,bowers2024matching}, however, their setup is largely different from ours.







\section{Problem Setup}\label{sec:model}

\subsection{Delegated Choice}
We first introduce the \emph{delegated choice} problem by \cite{armstrong2010model,kleinberg2018delegated}. 
There is a principal who wants to solve a certain problem, for which there is a set $\om = \{\om_1,\om_2,\ldots, \om_n\} \subseteq \Omega$ of possible pre-specified alternatives (equivalently, solutions, options, or boxes). 
However the principal is not an expert and may want help from an expert agent to choose the right solution.
Without loss of generality, we assume $\Omega$ includes a null element $\perp$ to handle the case when no element is selected in the end, which results in the utility of zero for both the principal and agent.
Each $\om_i$ is independently sampled from a probability distribution $D_{i}$ supported on $\Omega$.
We assume that the set $\om$ of available elements (including the number of solutions) is exogenous to the mechanism designed by the principal, which is standard in literature of the delegated choice problem, \cf \cite{armstrong2010model,kleinberg2018delegated}.
We write $\Omega^*$ to denote the collection of finite sequences over $\Omega$.

The principal and the agent have their own utility functions $x(\cdot),y(\cdot):\Omega \mapsto \R_{\ge 0}$, respectively.
These utility functions might be misaligned, and the agent is interested in maximizing his own payoff, rather than maximizing the principal's. 
There exists an information asymmetry between the principal and the agent such that the principal is not aware of the utilities, but the agent can observe all the utilities $x(\om_i)$ and $y(\om_i)$ for $i \in [n]$.
After the agent observes the elements and utilities therein, he sends a signal, \eg an element, to the principal, and then the principal finally decides whether or not to accept the proposed element.
Without any screening device to restrict the agent's behavior, the agent will obviously send a signal to maximize his own payoff, \ie propose an element that maximizes $y(\om_i)$.
We write $X_i$ and $Y_i$ to denote the random utility $x(\om_i)$ and $y(\om_i)$, respectively, and $D_{X_i}$ and $D_{Y_i}$ to denote the corresponding probability distributions of these utilities.
Let $D_X = \times_{i \in [n]} D_{X_i}$ and $D_Y = \times_{i \in [n]} D_{Y_i}$, and $\cD_X$ be the set of all possible distributions $D_X$ over nonnegative real numbers given $[n]$, and similarly define $\cD_Y$.

To prevent adversarial behavior of the agent, the principal devises a mechanism $\Mec = (\Sigma, g)$ where $\Sigma$ is an arbitrary set of signals and $g : \Sigma \to \Omega$ is an allocation function given the agent's signal. Namely, once the principal commits to a \emph{signaling mechanism} $\Mec$, the agent sends a signal as best-response, and the principal commits to an element based on the sent signal.
We write $X^*$ to denote the set of all possible finite sequences over $X$ for any finite set $X$.
Given a mechanism $\Mec$, the agent's strategy is defined by a mapping $\sigma: \Omega^* \to \Sigma$.
Namely, the agent observes the realization  $\om$ and corresponding utilities, and best-responds via $\sigma(\om_1,\ldots, \om_n)$. 
We often write $\sigma^*$ to denote the agent's best-response.

To design an effective mechanism, the principal uses a prior information regarding utility of the solution.
Formally, we mostly focus on \emph{agent-oblivious} \emph{Bayesian} mechanisms which know the distribution of $X_i$ but not $Y_i$ for $i \in [n]$, and whose performances are measured with respect to the worst-case agent's distribution of $Y_i$'s.
Correspondingly, we assume that the principal cannot observe the agent's utility for each element in agent-oblivious mechanisms.
We often consider \emph{agent-aware} mechanisms which know both distributions of $X_i$ and $Y_i$, for $i \in [n]$.

Essentially, \cite{kleinberg2018delegated} consider the following class of single proposal mechanisms instead of the general signaling mechanisms.
\begin{definition}[Single-proposal mechanism]
    In a \emph{single proposal mechanism} (SPM), the principal announces an eligible set $E \in \R_{\ge 0}^2$, the agent proposes an element $\om^* \in \Omega$, and the principal accepts the element if $(x(\om^*),y(\om^*)) \in E$, and rejects it otherwise.
\end{definition}
In short, the agent proposes a single solution and the principal decides whether or not to accept based on the eligible set.
Notably, they formally prove that it is without loss of generality to focus on single-proposal mechanisms due to a revelation-principle-style of argument.
\cite{kleinberg2018delegated} show that there exists a SPM which achieves a constant approximation to the first-best outcome in hindsight, $\Ex{\max_{i \in [n]} X_i}$, by exploring a novel connection to a version of the prophet inequality~\citep{krengel1987prophet}.

\begin{theorem}[\cite{kleinberg2018delegated}]\label{thm:kk18}
    There exists a SPM with a single-threshold that yields the principal's expected utility of at least $\frac{1}{2}\Ex{\max_{i \in [n]} X_i}$.
\end{theorem}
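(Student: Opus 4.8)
The plan is to reduce the construction of a good single-proposal mechanism to a single-threshold prophet inequality, following the connection identified by Kleinberg and Kleinberg. Concretely, I would choose the eligible set to be a \emph{principal-side threshold}: fix some $\tau \ge 0$ and let $E = \{(x,y) : x \ge \tau\}$, so the principal accepts the agent's proposed element $\om^*$ if and only if $x(\om^*) \ge \tau$. Under such a mechanism the agent, observing all realizations, will propose (among the elements that clear the bar, i.e.\ with $X_i \ge \tau$) the one maximizing his own utility $Y_i$; if no element clears the bar he proposes $\perp$. The key structural observation is that \emph{whenever at least one element has $X_i \ge \tau$, the principal ends up with an element of value at least $\tau$}, regardless of which of the eligible elements the agent selects. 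So the principal's utility is at least $\tau \cdot \Ind{\max_i X_i \ge \tau}$, and more care is needed only to extract the full $\max$-value rather than just $\tau$.

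The cleaner route, which gives the exact constant $\tfrac12$, is to randomize the threshold in the style of a prophet inequality. First I would recall the (single-sample / single-threshold) prophet inequality: there is a threshold $\tau$ — for instance $\tau$ chosen so that $\Pr[\max_i X_i \ge \tau] = \tfrac12$, or the ``median'' threshold — such that the strategy ``accept the first $X_i$ exceeding $\tau$'' guarantees expected value at least $\tfrac12 \Ex{\max_i X_i}$. Now I claim the SPM with eligible set $E = \{(x,y): x \ge \tau\}$ for this same $\tau$ does at least as well. The point is that the agent's choice among the eligible elements only affects \emph{which} element $\ge \tau$ the principal gets, and any such element has value $\ge \tau$; meanwhile the event that \emph{some} element is eligible is exactly the event that drives the prophet bound, and on that event the principal collects value $\ge \tau$. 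To get the full $\Ex{\max_i X_i}/2$ rather than $\tau/2$, I would instead argue directly on the quantity $\Ex{(X_i - \tau)^+}$: condition on element $i$ being the ``pivotal'' one (the first, in some fixed order, with $X_i \ge \tau$), note that the principal gets at least $\tau$ deterministically on the eligibility event and separately account for the excess, exactly mirroring the standard prophet-inequality accounting $\Ex{\max_i X_i} \le \tau + \sum_i \Ex{(X_i-\tau)^+}$ together with $\Pr[\text{some }X_i \ge \tau]\cdot \tau \ge \tfrac12 \tau$ and the pivotality decomposition of the excess terms.

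The main obstacle — and the reason this is not literally the classical prophet inequality — is that in the online/prophet setting the gambler, upon crossing the threshold, \emph{keeps that specific value} $X_i$, whereas here an adversarial agent chooses \emph{which} threshold-crossing element to hand over, and he does so to maximize $Y_i$, which is uncorrelated with $X_i$ from the principal's perspective. So the step that needs the most care is showing that handing the accounting over to the agent costs nothing beyond the threshold guarantee: on the eligibility event the principal is guaranteed $\ge \tau$ no matter what the agent does, and the ``excess above $\tau$'' part of the prophet bound can be charged in a way that does not require the principal to receive the argmax element. I would make this precise by writing $X_i \cdot \Ind{\om^* = \om_i} \ge \tau \cdot \Ind{\om^* = \om_i}$ for the accepted element and then lower-bounding $\Ex{\text{principal}} \ge \tau \cdot \Pr[\exists i: X_i \ge \tau]$, and finally choosing $\tau$ optimally (the median-type threshold) to conclude $\ge \tfrac12 \Ex{\max_i X_i}$; the delegated-choice subtlety is entirely absorbed into the observation that an element clearing the principal's bar is ``good enough.''
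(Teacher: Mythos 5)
Your high-level plan (principal-side threshold on the $x$-coordinate plus a prophet-style accounting) is the right one, and you correctly isolate the delegation-specific subtlety: on the eligibility event the agent picks \emph{some} element with $X_i \ge \tau$, not necessarily the one maximizing $X_i$, so the principal is guaranteed only $\tau$ from the agent's choice. But two of the concrete steps you sketch do not close the argument.

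First, the summary step ``$\Ex{\text{principal}} \ge \tau \cdot \Pr[\exists i: X_i \ge \tau]$, and with the median threshold this is $\ge \tfrac12 \Ex{\max_i X_i}$'' is false. With the median threshold $\Pr[\exists i: X_i \ge \tau]=\tfrac12$, so this bound is $\tau/2$, and $\tau$ can be far below $\Ex{\max_i X_i}$ (for instance one box with value $M$ w.p.\ $1/2$ and $0$ otherwise: $\tau=0$, $\Ex{\max}=M/2$). You \emph{must} recover the excess terms $\Ex{(X_i-\tau)^+}$; dropping them and keeping only the base $\tau$ cannot work.

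Second, the pivotality event you propose for recovering the excess — ``$i$ is the \emph{first}, in some fixed order, with $X_i \ge \tau$'' — does not transfer to the delegated setting. If $i$ is first to cross but some later $j$ also crosses, an adversarial agent (with $Y_j > Y_i$) proposes $j$, and then the principal collects $(X_j-\tau)^+$, which you cannot charge to $\Ex{(X_i-\tau)^+}$. The event you actually need is ``$i$ is the \emph{unique} element with $X_i\ge\tau$,'' since only then is the agent forced (with the paper's tie-breaking assumption) to propose $i$, letting you credit the excess $(X_i-\tau)^+$ to the principal. The correct accounting is
\begin{align*}
\Ex{\text{principal}} \;\ge\; \tau\,\Pr\bigl[\exists i: X_i\ge\tau\bigr] \;+\; \sum_{i}\Ex{(X_i-\tau)^+}\,\Pr\bigl[\forall j\ne i: X_j<\tau\bigr],
\end{align*}
where the split uses independence of $X_i$ from $\{X_j\}_{j\ne i}$. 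With the median threshold, $\Pr[\exists i: X_i\ge\tau]\ge \tfrac12$ and $\Pr[\forall j\ne i: X_j<\tau]\ge \Pr[\forall j: X_j<\tau]=\tfrac12$, so both coefficients are at least $\tfrac12$, and combining with $\Ex{\max_i X_i}\le \tau+\sum_i\Ex{(X_i-\tau)^+}$ gives the bound. Replacing ``first crosser'' with ``unique crosser'' is the one missing idea; the rest of your outline then goes through.
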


\subsection{Delegated Choice Problem with Inspection Cost}
Importantly, however, one might notice that the definition of SPM and results therein build upon a critical assumption such that if the agent proposes the solution $\om_i$, then he truthfully reports the corresponding utility $x(\om_i)$ and $y(\om_i)$ to the principal so that the principal can reliably believe the reported utility and decide her action depending on it.
For instance, if the agent knows that the principal would simply trust the utility reported by him, the agent may strategically misreport the utility of the proposed solution to induce the principal to accept the solution.\footnote{The previous literature precludes such scenario by assuming that such a untruthful report might be easily verified, thus the agent would truthfully report the utility given the threat of future punishment or to maintain his reputation.}

Therefore, without any procedure to verify the proposed solution's value, the principal may realize arbitrarily bad utility if she simply accepts the proposal relying on the reported utility.
Hence, we consider a scenario in which the principal can further \emph{inspect} the elements to verify that the elements correspond to correct utilities. Under this assumption that the principal can verify the proposed element's utility, it is natural to allow the principal to further inspect the other elements, or just directly inspect the elements by herself without any delegation.
Our model captures all these scenarios.

Now we formally define the problem of \emph{delegated choice with inspection cost} (DCIC).
As in the delegated choice problem presented above, there exist $n$ alternatives $\om_1, \ldots, \om_n$, and the principal and the agent have corresponding utility functions $x,y: \Omega \to \R_{\ge 0}$.
Analogously, we write $X_i$ and $Y_i$ to denote $x(\om_i)$ and $y(\om_i)$.
Each alternative $\om_i$ is equipped with an inspection cost $c_i \ge 0$ for $i \in [n]$.
In addition, if the principal decides to delegate to the agent, it incurs an exogenous cost of $c_{\Del} \ge 0$ to the principal.

Regardless of whether the principal delegates or not, the principal needs to design an inspection policy described as follows.
\begin{definition}[Inspection policy]
    An \emph{inspection policy} $P$ is defined by a mapping $P:\cH \to ([n] \times \cA) \cup \{\text{stop}\}$ where $\cH = ([n] \times \R_{\ge 0})^*$ denotes the set of all possible inspection histories and $\cA = \{\text{select}, \text{inspect}\}$ denotes the set of possible actions on solutions, and $\{\text{stop}\}$ denotes the possible action of stopping without selecting any solution.
    In words, given any history $H \in \cH$ that consists of inspected elements and corresponding observed utilities, the inspection policy $P$ determines the next action of whether to select an element $i \in [n]$ (possibly without inspecting it), inspect an element $i$, or stop without selecting any solution.
    Let $\cP$ be the set of all possible inspection policies.
\end{definition}

If the principal decides to delegate to the agent, the principal should further commit to a signaling mechanism, defined as follows.
\begin{definition}[Signaling mechanism]
    In the \emph{signaling mechanism}, given a set of signals $\Sigma$, the principal commits to a mapping $\Sig: \Sigma \to \cP$ that maps the agent's signal $\sigma \in \Sigma$ to an inspection policy $P \in \cP$.
\end{definition}

Finally, the principal's overall mechanism can be fully characterized by $\Mec = (\bI_{\Del}, P, \Sig)$ where $\bI_{\Del} \in \{0,1\}$ denotes the indicator variable of whether the principal decides to delegate ($1$) or not ($0$), $P \in \cP$ denotes the direct inspection policy that will be run if $\bI_{\Del} = 0$, and $\Sig$ is the signaling mechanism $\Sig: \Sigma \to \cP$ that will be run if $\bI_{\Del} = 1$.

In particular, in DCIC problem, we consider a \emph{modified} version of SPM as follows:
\begin{definition}[Single-proposal mechanism with inspection]
    In a single proposal mechanism with inspection (SPMI), the principal announces an eligible set $E \in \R_{\ge 0}^2$, and the agent proposes an element $\om^* \in \Omega$.
    If $\om^* \neq \perp$, then the principal inspects $\om^*$ and selects it if $(x(\om^*),y(\om^*)) \in E$ and otherwise does not select any element.
    If $\om^* = \perp$ then the principal does not inspect nor selects any.
\end{definition}

Then, the timing of the game can be described as follows.
\begin{enumerate}
    \item The principal commits to a mechanism $\Mec = (\bI_{\Del}, P, \Sig)$
    \item If $\bI_{\Del} = 0$, the principal runs the  inspection policy $P$ and the outcome is determined by $P$.
    \item Otherwise, the agent best-responds according to strategy $\sigma^*: \Omega^* \to \Sigma$ which maximizes his own utility given $\Sig$, and the principal runs the signaling mechanism $\Sig(\sigma^*((\omega_1,\ldots,\omega_n)))$ to inspect the solutions and determine the outcome.
\end{enumerate}
Given a principal's mechanism $\Mec$ and the agent's response $\sigma^*$ if $\bI_{\Del}=1$, let $\bA_i$ be the indicator variable denoting whether the element $i$ is selected by the principal, and $\bI_i$ be the one denoting whether the principal inspects $i$, for $i \in [n]$.
Then, one can observe that the mechanism's expected utility given the agent's response can be written as
\begin{align*}
    \Ex{\Mec} = \Ex{\parans{\sum_{i \in [n]} \bA_i X_i - \bI_i c_i} - \bI_{\Del} c_{\Del}},
\end{align*}
where we write $\Ex{\Mec}$ to denote the principal's expected utility given $\Mec$.



Recall that~\cite{kleinberg2018delegated} considers a mechanism that approximates the first-best outcome in hindsight, \ie $\Ex{\max_{i \in [n]} X_i}$.
In our setting, however, it is straightforward to see that it is impossible to approximate the first-best outcome in hindsight, formalized as follows.

\begin{proposition}[Inapproximability against the first-best outcome]\label{prop:inapx-firstbest}
    In the DCIC, there exists a problem instance such that no mechanism can obtain a utility of $\alpha$ times the first-best outcome in hindsight, \ie $\Ex{\max_{i \in [n]} X_i}$ for any $\alpha = \Omega(1/n)$.
\end{proposition}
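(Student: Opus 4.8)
The plan is to exhibit a single hard instance where every mechanism — whether it delegates, inspects directly, or both — necessarily collects at most a $1/\Omega(n)$ fraction of $\Ex{\max_{i\in[n]}X_i}$. The natural construction mirrors the intuition behind Proposition~\ref{prop:no-del-fails}: take $n$ i.i.d.\ alternatives, each with $X_i = v$ with probability $p$ and $X_i = 0$ otherwise, and set the inspection cost $c_i = c$ for each $i$, with the delegation cost $c_{\Del}$ also tuned. The key is to choose $p = \Theta(1/n)$ so that, with constant probability, exactly one alternative realizes the high value $v$, hence $\Ex{\max_i X_i} = \Theta(v)$; yet any mechanism that wants to \emph{find} that alternative either pays inspection cost proportional to $n$ (in expectation $\Theta(np\cdot c) = \Theta(c)$ per useful inspection, but $\Theta(n)$ inspections in the bad event where the high value sits late in the order) or learns nothing useful from the agent.

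First I would fix the principal-utility distribution as above and show $\Ex{\max_{i\in[n]}X_i} = v\bigl(1-(1-p)^n\bigr) = \Theta(v)$ for $p = c/(nv)$ or a similar choice. Second, I would handle the \emph{no-delegation} branch: since the alternatives are exchangeable, an optimal direct inspection policy is characterized by how many boxes it is willing to open before quitting; if it opens $k$ boxes it expects to pay $\approx kc$ and finds a $v$ with probability $\approx kp$, so its net utility is $O(kpv - kc\cdot\Pr[\text{no }v\text{ yet}])$, which for this scaling of $p$ is $O(v/n)$ no matter how $k$ is chosen — and selecting an uninspected box yields only $\Ex{X_i} = pv = O(v/n)$. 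Third, the \emph{delegation} branch: here I need the agent's utilities to be adversarial. Set $Y_i$ so that the agent's best response carries \emph{no information} about which box (if any) realized $v$ — e.g.\ make $Y_i$ i.i.d.\ and independent of $X_i$, so the agent's proposal is uncorrelated with $X$; then after delegating, the principal's posterior on the proposed box is just the prior $(p,v)$, and any eligibility test or follow-up inspection faces exactly the same $O(v/n)$ ceiling as the direct policy, minus $c_{\Del}$.

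The main obstacle is the delegation branch: I must argue that \emph{no} oblivious mechanism — which by definition does not know $D^Y$ — can do better against the worst-case $Y$, and crucially that allowing the principal to inspect \emph{additional} boxes after delegation does not help. The cleanest way is to observe that once the agent's signal is information-free about $X$, the principal is in exactly the setting of the no-delegation analysis (a PNOI instance on $n$ exchangeable boxes), so the $O(v/n)$ bound transfers verbatim, and delegation only subtracts $c_{\Del}\ge 0$. A secondary subtlety is choosing the constants so that $v$ is large relative to $c$ (so the first-best is genuinely $\Theta(v)$, not eaten by costs) while $np = \Theta(1)$; picking, say, $c = v/n$ and $p = 1/n$ makes $\Ex{\max_i X_i} = \Theta(v)$ while every mechanism's value is $O(v/n)$, giving the claimed $\alpha = O(1/n)$ bound. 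I would close by remarking that this instance simultaneously shows delegation, inspection, and their combination are all individually insufficient against the first-best benchmark, motivating the second-best comparisons used in the rest of the paper.
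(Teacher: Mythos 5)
Your construction is in the right spirit, but the proposal has a genuine gap in both the parameter choice and the key claim about the delegation branch. You assert that choosing $Y_i$ i.i.d.\ and independent of $X_i$ makes ``the agent's proposal uncorrelated with $X$,'' so that after delegating the principal's posterior is just the prior. That is false in this model: the agent \emph{observes} the realization of $X$ before signaling, and best-responds to the mechanism. If the mechanism inspects the proposed box and accepts only when $X_j>0$, then an agent with $Y\perp X$ and deterministic strict preferences will propose his most-preferred box \emph{among those with} $X_j>0$; the resulting signal is extremely informative (it always points to a box with $X_j=v$ whenever one exists), even though $Y$ and $X$ are independent. Independence of the distributions does not imply independence of the strategic best-response from $X$.

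This matters concretely because your parameters $p=1/n$, $c=v/n$ are simply too weak to make the statement true. Consider the SPMI with threshold $0$: the agent proposes a box with $X_j=v$ if any exists, the principal inspects it once and accepts, paying only $c=v/n$. The principal's expected utility is roughly $\bigl(1-(1-1/n)^n\bigr)(v-c)\approx(1-1/e)\,v$, which is $\Theta\bigl(\Ex{\max_i X_i}\bigr)$ — a constant-factor approximation to the first-best, directly contradicting the claim that ``every mechanism's value is $O(v/n)$.'' The direct-inspection bound you imported from Proposition~\ref{prop:no-del-fails} does not govern the delegation branch, because a single well-aimed inspection (guided by the agent) already suffices when $c$ is this small.

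The paper's proof sidesteps both issues by making the inspection cost enormous, $c=(1-\eps)\Ex{\max_i X_i}$ with $\eps=1/n$ and $v=1/\eps$. Then paying for even a single inspection — no matter how informative the agent's signal — leaves at most $\eps\Ex{\max_i X_i}=O(1)$, while any mechanism that never inspects cannot beat $\max_i\Ex{X_i}=1$ against a deterministic, strictly-ordered adversarial $Y$ (whose fixed best-response reveals nothing precisely \emph{because} the mechanism never inspects). The first-best is $\Theta(n)$, giving the $O(1/n)$ ratio. So you need a two-case analysis on ``does the mechanism ever inspect,'' not a claim that delegation is information-free; and you need $c$ comparable to the first-best rather than $O(v/n)$, so that the informativeness of the signal is irrelevant.
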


Thus, we consider a second-best benchmark that competes with the optimal mechanism.
Let $\Opt$ be an optimal mechanism that the principal could have run given the distributions of $X_i$s, which maximizes expected utility given the worst-case agent.
Accordingly, we will consider a polynomial time computable mechanism that approximates $\Opt$.
We write $\Ex{\Opt}$ to denote the corresponding principal's expected utility
with respect to the worst-case agent's utility.
Recall that $D_X = \times_{i \in [n]}D_{X_i}$ and $D_Y = \times_{i \in [n]}D_{Y_i}$ denote the sets of all possible distributions for $(X_i)_{i \in[n]}$ and $(Y_i)_{i \in [n]}$, respectively.
The principal's expected utility, given mechanism $\Mec$ and distribution $D_X$, can be written as
\begin{align}
    \Exu{D_X}{\Mec} = \min_{D_Y \in \cD_Y}\Exu{D_X,D_Y}{\parans{\sum_{i \in [n]} \bA_i X_i - \bI_i c_i} - \bI_{\Del} c_{\Del}}.\label{def:obl}
\end{align}
We focus on a set of deterministic mechanisms that deterministically set $\bI_i$ and $\bI_{\Del}$ to one or zero.\footnote{Pandora's box problem and its variant typically has no advantage of randomizing the policy, but DCIC can have some advantages since the agent's best-response could be made different by randomization. Studying randomized mechanism remains a major open problem.}
Given an optimal mechanism $\Opt$, the mechanism $\Mec$ is an $\alpha$-approximation\footnote{Note that this differs from computing worst-case ratio over every possible $(D_X, D_Y) \in (\cD_X, \cD_Y)$.
This is mainly due to the fact that the principal is oblivious to the agent's mechanism, thus the natural objective one could consider is to optimize the expected utility over the worst-case agent's distribution.} if
\begin{align}
    \alpha \cdot \min_{D_X \in \cD_X}\frac{\Exu{D_X}{\Mec}}{\Exu{D_X}{\Opt}} \ge 1.\label{def:apx}
\end{align}


We finally impose a minor assumption such that if the agent is indifferent to the eventual outcome of two actions, \ie facing the same utility, then the agent behaves in favor of the principal to choose an action that induces a higher utility for the principal.
This is innocuous since it is always better to make the principal satisfactory for the agent's reputation effect unless it hurts the agent's utility.


\paragraph{Pandora's box with nonobligatory inspection}
Interestingly, we first observe that DCIC generalizes the Pandora's box problem with nonobligatory inspection (PNOI) by~\cite{doval2018whether}.


\begin{definition}[Pandora's box with nonobligatory inspection]
    In the \emph{Pandora's box problem with nonobligatory inspection} (PNOI), there are boxes indexed by $1,\ldots, n$ each of which is equipped with a random reward $X_i$ for $i \in [n]$.
    The searcher knows the distribution of the random variables in priori.
    Opening box $i$ incurs cost $c_i$ and reveals the realization $X_i$.
    Each box can only be selected after it is opened, and selecting $i$ yields the reward of $X_i$.
    Importantly, the searcher can also select the box without inspecting it.
    The searcher's objective is to maximize the expected payoff which is defined as the expected reward of the selected box minus the total inspection costs.
\end{definition}

Essentially, each box can be viewed as an element where the reward from the box is the principal's utility for that solution.
Then, without any delegation process, the principal can exactly solve the Pandora's box problem with nonobligatory inspection.
We write \emph{direct inspection policy} to denote the policy entirely based on the Pandora's box with nonobligatory inspection, without any delegation.
Thus, we often use the word box to denote an element for DCIC, and vice versa for PNOI.

Indeed, if the cost of delegation $c_{\Del}$ is far much larger than the maximal value of the support of the reward $X_i$'s, then the principal would trivially set $I_{\Del} = 0$.
This implies that we can reduce PNOI problem to an instance of DCIC.
Thus, since PNOI is recently proven to be NP-hard by~\cite{fu2023pandora}, it directly follows that DCIC is also NP-hard.
\begin{proposition}
    The decision version of DCIC is NP-hard.
\end{proposition}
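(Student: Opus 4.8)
The plan is to formalize the reduction sketched immediately before the statement: the decision version of PNOI is NP-hard by Fu, Li, and Liu, so it suffices to give a polynomial-time reduction from PNOI to DCIC. Given a PNOI instance with boxes $1,\dots,n$, reward distributions $D^X_i$, and opening costs $c_i$, I would build the DCIC instance on the same $n$ alternatives with the same reward distributions $D^X_i$ and the same inspection costs $c_i$, and set the delegation cost to $c_{\Del} := 1 + \max_{i\in[n]} v_i$, where $v_i$ is the largest point in the (finite) support of $D^X_i$. Note the DCIC instance is fully specified by these data, since by definition the benchmark in~\eqref{def:obl} is already worst-case over the agent's distribution $D_Y \in \cD_Y$. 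Because the hard PNOI instances of Fu, Li, and Liu have reward supports of cardinality three with values of polynomial bit-length, $c_{\Del}$ has polynomial bit-length as well, so the reduction runs in polynomial time.

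Next I would show that the optimal DCIC mechanism never delegates. For any mechanism with $\bI_{\Del}=1$ and any realization, at most one alternative is selected, so the gross utility satisfies $\sum_{i\in[n]} \bA_i X_i - \sum_{i\in[n]} \bI_i c_i \le \max_{i\in[n]} X_i \le \max_{i\in[n]} v_i$; hence the net utility $\bigl(\sum_{i\in[n]} \bA_i X_i - \sum_{i\in[n]} \bI_i c_i\bigr) - c_{\Del} \le \max_{i\in[n]} v_i - c_{\Del} = -1$ with probability one, so every delegating mechanism has expected utility at most $-1$, and a fortiori at most $-1$ under the worst-case $D_Y$. On the other hand, the non-delegating mechanism that immediately stops achieves utility exactly $0$. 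Therefore $\bI_{\Del}=0$ at the optimum, and the optimal DCIC mechanism is a direct inspection policy $P \in \cP$. By definition a direct inspection policy is precisely a feasible PNOI policy, and its DCIC payoff equals its PNOI payoff — here the minimization over $\cD_Y$ in~\eqref{def:obl} is vacuous, since a non-delegating mechanism's outcome does not depend on the agent. Consequently $\Exu{D_X}{\Opt}$ for the constructed DCIC instance equals the optimal expected payoff of the original PNOI instance.

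Finally I would conclude: for every threshold $t$, the constructed DCIC instance satisfies $\Exu{D_X}{\Opt} \ge t$ if and only if the original PNOI instance has optimal expected payoff $\ge t$. Thus the decision version of DCIC is at least as hard as the decision version of PNOI, which establishes NP-hardness. The only step that requires any care — and really the only potential obstacle — is checking that the reduction is genuinely polynomial, i.e., that $c_{\Del}$ and the other numbers describing the DCIC instance blow up only polynomially; this is immediate from the explicit structure of the hard PNOI instances (constant-size support, polynomially bounded entries). Everything else is the one-line observation that a prohibitively large delegation cost makes delegation strictly dominated, collapsing DCIC onto PNOI.
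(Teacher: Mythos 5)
Your proof is correct and takes essentially the same route as the paper: reduce from PNOI by setting $c_{\Del}$ above the maximum attainable reward so that delegation is strictly dominated by the empty policy, which collapses DCIC to PNOI and transfers the NP-hardness of \cite{fu2023pandora}. You have simply filled in the details (the explicit choice $c_{\Del} = 1 + \max_i v_i$, the bit-length check, and the observation that the $\min_{D_Y}$ is vacuous for non-delegating mechanisms) that the paper leaves as a sketch.
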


\section{Main Results}\label{sec:results}
This section presents our main results on the $3$-approximate  mechanism.
Our results and intuitions therein are presented in a step by step manner, \ie we first present several examples to construct our intuition towards an efficient mechanism, and  present our main techniques and analysis.

\subsection{Warm-up}
Naturally, the first question one may ask is whether either of the pure delegation mechanisms, such as SPMI or the pure PNOI policy, would achieve a constant approximation ratio.
We here provide a formal explanation on why such naive approaches would necessarily fail.

We start with the example stating that if the principal only considers a direct inspection mechanism, i.e., always choosing $I_{\Del} = 0$, there exists a problem instance such that the approximation factor can be arbitrarily large.

\begin{proposition}[Direct inspection fails]\label{prop:no-del-fails}
    There exists a problem instance in the independent utility setting in which the optimal direct inspection policy cannot achieve a better than $\Omega(n)$-approximation.
\end{proposition}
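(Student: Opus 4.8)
The plan is to build a symmetric instance of $n$ i.i.d.\ two–point boxes on which a single proposal to the agent already secures a constant fraction of a large value $v$, while no direct inspection (PNOI) policy can beat the ex-ante mean $\Theta(v/n)$, because one inspection costs far too much relative to the tiny success probability. Concretely, take $n$ boxes with delegation cost $c_{\Del}=0$, where each box $i$ has inspection cost $c_i=v/2$ and principal-reward $X_i$ equal to a large $v>0$ with probability $p=1/n$ and $0$ otherwise; the agent's reward distribution is left arbitrary, since none of the estimates below depend on it.

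For the lower bound on $\Ex{\Opt}$, I would exhibit the feasible mechanism that runs the SPMI with eligible set $E=\{(x,y):x\ge v\}$, i.e.\ it inspects the proposed box and accepts it iff the principal's value there is $v$. For any realization: if some box has $X_i=v$, the agent's best response proposes such a box (it is accepted and pays him $Y_i\ge 0$, at least the null payoff $0$, and among indifferent choices ties are broken toward the principal, who strictly prefers this), so the principal nets $v-c_i=v/2$ after exactly one inspection; if no box attains $v$, the agent is indifferent among all proposals and, by the tie-breaking rule, proposes $\perp$, so the principal nets $0$. Hence this mechanism earns expected utility $\tfrac{v}{2}\bigl(1-(1-1/n)^n\bigr)\ge \tfrac{v}{2}(1-1/e)$ against every agent distribution, and therefore $\Ex{\Opt}\ge \tfrac{v}{2}(1-1/e)$.

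For the upper bound on the optimal direct inspection policy, I would use symmetry: since the boxes are i.i.d.\ and an opened box shows $v$ or $0$, and seeing $v$ makes it optimal to stop and take that box (no continuation exceeds $v$), every deterministic PNOI policy amounts to choosing some $k$, opening boxes one at a time, taking the first that reveals $v$, and otherwise (after $k$ zeros) taking an unopened box uninspected — worth the prior mean $v/n$ — or nothing. Its expected payoff is $\sum_{j=1}^{k}(1-p)^{j-1}p\,(v-jc_i)+(1-p)^{k}(v/n-kc_i)$; since $v-jc_i$ equals $v/2$ for $j=1$ and is $\le 0$ for $j\ge 2$, while $v/n-kc_i<0$ for every $k\ge 1$ once $n\ge 2$, any policy with $k\ge 1$ earns at most $p\cdot v/2=v/(2n)$, whereas $k=0$ earns $v/n$; so the optimum is exactly $v/n$. (Equivalently, one may invoke the cap-based bound of \cite{kleinberg2016descending,beyhaghi2019pandora}: as $\Ex{X_i}=v/n\le v/2=c_i$, the cap $\sigma_i$ with $\Ex{(X_i-\sigma_i)^+}=c_i$ is nonpositive, so $\kappa_i\le 0$, hence $\tilde\kappa_i\le\max(0,\Ex{X_i})=v/n$ pointwise, and $\Ex{\sum_i(\bA_iX_i-\bI_ic_i)}\le\Ex{\max_i\tilde\kappa_i}\le v/n$.)

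Combining the two bounds, the best direct inspection policy is at most a $\dfrac{(v/2)(1-1/e)}{v/n}=\dfrac{1-1/e}{2}\,n=\Omega(n)$ approximation on this instance, which proves the claim. I expect the upper bound to be the delicate step: one must rule out \emph{all} direct inspection strategies simultaneously — that is, argue that opening even a single box is never worthwhile for these parameters — and this is precisely what the symmetry reduction together with the short case analysis (or, alternatively, the $\tilde\kappa$-bound) delivers.
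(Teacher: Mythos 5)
Your proposal is correct and essentially mirrors the paper's proof: both use $n$ i.i.d.\ two-point boxes with success probability $1/n$ and a fixed inspection cost large enough that opening even one box is never worthwhile, lower-bound $\Ex{\Opt}$ by a single-threshold SPMI earning $\Theta(1)\cdot v$ thanks to the delegated signal, and upper-bound every direct-inspection policy by reducing it (via symmetry) to "inspect up to $k$, stop at the first $v$, else take an unopened box." The only cosmetic differences are your choice $c_i = v/2$ versus the paper's $c = 2/n$ with $v=1$ (both render every $k\ge 1$ unprofitable), and your slightly tighter accounting of the inspection cost in the SPMI lower bound, where the paper just subtracts $c$ unconditionally; neither affects the $\Omega(n)$ conclusion.
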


Then, one might wonder whether it is possible to obtain a good approximation if instead of precluding the delegation and directly inspecting the solutions, the principal commits to SPMI that only inspects the single solution proposed by the agent.

\begin{proposition}[SPMI fails]\label{prop:spm-fails}
    There exists a problem instance of DCIC in which the SPMI cannot be $\alpha$-approximate for any $\alpha > 0$.
\end{proposition}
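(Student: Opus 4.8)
The plan is to exhibit a single‑box instance on which \emph{every} SPMI has nonpositive worst‑case utility while the optimal mechanism has utility at least $1$; by the approximation inequality~\eqref{def:apx} this rules out an $\alpha$‑approximation for any $\alpha>0$. Concretely, I would take $n=1$, a deterministic reward $X_1\equiv 1$, inspection cost $c_1=2$, and delegation cost $c_{\Del}=0$ (the agent's reward distribution is, as always, chosen in the worst case). Lower‑bounding $\Opt$ is immediate: the non‑delegating mechanism whose inspection policy \emph{selects} box $1$ without ever opening it incurs neither an inspection nor a delegation cost and realizes $X_1=1$, so $\Ex{\Opt}\ge 1$.

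The substance is the bound on SPMI, and it rests entirely on the defining feature that an SPMI \emph{must} inspect the proposed element whenever it is not $\perp$. Consequently, on any realization the principal's payoff under an SPMI is one of only three values: $0$ when the agent proposes $\perp$; $X_1-c_1=-1$ when the agent proposes box $1$ and it is accepted; or $-c_1=-2$ when the agent proposes box $1$ and it is rejected. Thus the SPMI payoff is at most $0$ \emph{pathwise} --- independently of the eligible set $E$, of the agent's strategy, and of the worst‑case distribution $D_Y$ --- so $\Ex{\Mec}\le 0$ for every SPMI $\Mec$. (For concreteness, $E=\emptyset$ attains $0$: box $1$ is never accepted, so proposing it yields the agent the same utility, $0$, as proposing $\perp$, and by the tie‑breaking convention the indifferent agent proposes $\perp$.)

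Putting the two bounds together, on this instance $\Exu{D_X}{\Mec}/\Exu{D_X}{\Opt}\le 0/1=0$ for every SPMI $\Mec$, hence $\min_{D_X\in\cD_X}\Exu{D_X}{\Mec}/\Exu{D_X}{\Opt}\le 0$ and~\eqref{def:apx} fails for all $\alpha>0$, which is the claim. There is essentially no obstacle here --- the argument is the one‑line observation that an inspection cost exceeding the top of the support of $X_1$ makes every \emph{selecting} run of an SPMI unprofitable; the only points to be careful about are (i) recalling that the optimal mechanism may select a box without ever inspecting it, so it escapes the inspection cost, and (ii) invoking the tie‑breaking rule, which is needed only to certify that the best SPMI attains exactly $0$ rather than merely a nonpositive value. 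Finally, I would note that the degeneracy $n=1$ and the atom in $X_1$ are inessential: the same reasoning applies verbatim to any bounded reward whose inspection cost exceeds the maximum of its support, which is precisely the informal statement that a sufficiently large inspection cost makes uninspected selection dominate the SPMI.
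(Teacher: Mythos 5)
Your proof is correct and takes essentially the same approach as the paper's: both construct an instance where the inspection cost is at least the maximum of the reward's support, so any SPMI (which must inspect any non-null proposal) earns nonpositive utility, while the optimal mechanism simply selects a box without inspection and earns strictly positive utility. The paper's instance uses Bernoulli $\{0,1\}$ rewards with $c_i=1$ whereas you use a single deterministic reward with $c_1=2$, but the underlying observation is identical.
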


\subsection{$3$-approximation in costless delegation setting}
Thus, it is necessary to construct a mechanism that considers the possibility of delegation as well as further inspection of the elements or selection without inspection.
To simplify our argument, we start with a simple scenario in which the cost of delegation is zero, \ie $c_{\Del} = 0$.
In this case, one can always set $I_{\Del}$ to one with zero cost since it is always better to delegate to the agent to extract some useful information rather than not delegating, and any mechanism $\Mec = (0, P, \Sig)$ with $I_{\Del}=0$ can be simulated by choosing a mechanism $\Mec' = (1,P,\Sig')$ where $\Sig'$ maps every signal to $P$, \ie ignore the agent's signal.

To analyze an approximation factor of a mechanism, it is essential to obtain a proper upper bound on the benchmark $\Ex{\Opt}$.
However, it is nontrivial to quantify what the optimal mechanism would have extracted as additional information from the agent's signal.
Indeed, the following example shows that the value of information could be significant for certain types of mechanisms.
\begin{proposition}\label{prop:info}
    There exists an instance of the problem and a corresponding mechanism where the ratio of utility gained from uninspected solutions given any agent's signal using this mechanism, \ie $E[\sum_iX_i\bA_i\cdot\Ind{\bI_i = 0}]$ to the maximum expected utility of any solution $\max_{i\in[n]}E[X_i]$ is $\Omega(n)$.
\end{proposition}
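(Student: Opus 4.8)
The plan is to construct an explicit instance with $n$ alternatives together with a mechanism (and the agent's best response to it) under which $\Omega(n)$ of the alternatives are selected \emph{without} being inspected, each contributing roughly a constant to the principal's utility, while no single alternative has expected utility more than $O(1)$. The key tension to exploit is that the agent's signal can tell the principal which alternative is ``good'' without the principal paying to inspect it; the subtlety is that the agent is strategic, so the mechanism must be designed so that the agent's incentive-compatible best response actually reveals this information truthfully in the relevant event.

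Concretely, I would take alternatives $i \in [n]$ where $X_i$ is supported on $\{0, v_i\}$ for a suitable increasing sequence $v_i$ (e.g.\ $v_i = 2^i$ or even all $v_i$ equal to some $v$), with $\Pr[X_i = v_i]$ small enough that, with high probability, exactly one alternative realizes its high value and which one is essentially uniform; this keeps $\max_i \Ex{X_i}$ bounded by a constant. For the agent's utilities $Y_i$, I would choose them so that the agent strictly prefers the principal to select the alternative that actually realized the high $X$-value — for instance by making $Y_i$ positively co-realized with $X_i$ (agent gets a reward only when $X_i$ is high), or by a small reward that is maximized precisely at the high-value alternative. The mechanism asks the agent to name one alternative $\om^*$ and then runs a signaling policy that \emph{selects $\om^*$ without inspecting it} whenever the announced index together with a cheap consistency check (inspecting one or a few \emph{other} alternatives, or none at all) passes; crucially the selected alternative itself is never inspected. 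I then verify that under this mechanism the agent's best response is to report the true high-value index, so that $\bA_{\om^*} = 1$, $\bI_{\om^*} = 0$, and $X_{\om^*} = v_{\om^*}$ in the high-probability event. Summing over the randomness of which index is high, $\Ex{\sum_i X_i \bA_i \Ind{\bI_i=0}} = \Theta\big(\tfrac1n \sum_i v_i\big)$ (or $\Theta(v)$ if all values are equal and we instead make the \emph{number} of uninspected selections scale — see below), which I will arrange to be $\Omega(n)$ while $\max_i \Ex{X_i} = O(1)$.

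An alternative, perhaps cleaner, construction: take $v_i$ geometrically increasing so that $\Ex{X_i} \le 1$ for every $i$ but $\sum_i \Pr[X_i = v_i]\, v_i$ is still bounded — here the single high-value event already gives a selected uninspected alternative of expected contribution $\Theta(1)$ per realization but the ratio is only constant, so this does not suffice. To get the $\Omega(n)$ ratio I really need \emph{many} alternatives selected without inspection simultaneously: so instead I would have the mechanism delegate, have the agent point to a large set of ``good'' alternatives (or run $n$ parallel single-proposal sub-mechanisms over disjoint alternative-pools, stitched into one mechanism that ultimately selects a single alternative but whose expression $\sum_i X_i \bA_i \Ind{\bI_i=0}$ we lower-bound via a coupling / conditioning argument), and argue that in expectation a constant fraction of these are both selected-able and uninspected. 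The main obstacle I expect is precisely this: making ``many uninspected selections'' compatible with the rule that only one alternative is ultimately selected — so the honest reading is that the quantity $\Ex{\sum_i X_i\bA_i \Ind{\bI_i = 0}}$ with $\sum_i \bA_i \le 1$ forces me to instead blow up the \emph{value} $v$ of the single uninspected selected box relative to $\max_i \Ex{X_i}$. That is achievable by making the high value $v$ large (say $v = n$) but the probability of \emph{inspecting the winner} small: the mechanism, guided by the agent's signal, selects the winner uninspected almost surely, contributing $\approx n \cdot \Pr[\text{exactly one high}]$, while $\max_i \Ex{X_i} = v \cdot p \le 1$ by choosing $p \le 1/n$; with $p = \Theta(1/n)$ the event ``exactly one high'' has constant probability, giving ratio $\Theta(n \cdot 1)/O(1) = \Omega(n)$. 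So the real work is (i) pinning down the agent's utilities and the consistency check so the agent's incentive-compatible best response reveals the winner, and (ii) checking the mechanism never needs to inspect that winning box. I would carry these out in that order, and I expect step (i) — the incentive argument forcing honest revelation of the winner under an agent-oblivious-style mechanism — to be the delicate point, handled by making the agent's payoff from a successful selection of the true winner strictly dominate any payoff from misreporting.
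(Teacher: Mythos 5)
Your high‑level intuition is correct and matches the paper: make one box realize a high value with small probability, have the agent's signal identify that box, and select it \emph{without} inspecting it, so that the uninspected‑selection utility far exceeds $\max_i \Ex{X_i}$. But the core of the proposition is precisely the part you defer — specifying the mechanism and proving that \emph{any} agent's best response reveals the winner — and the specific routes you sketch for it have problems. You suggest a ``cheap consistency check (inspecting one or a few other alternatives, or none at all)'': with ``none at all'' the agent simply names whatever maximizes $Y$, which under an adversarial $D_Y$ has nothing to do with the high $X$‑box, and with ``one or a few'' you still cannot rule out that the agent points to some un‑inspected zero‑value box. You then propose making $Y_i$ ``positively co‑realized with $X_i$'' to align incentives, but the proposition (as stated in Section~2, ``corresponding agent's best‑response signal'') needs the ratio to hold regardless of $D_Y$; engineering a friendly $D_Y$ only proves the weaker claim that \emph{some} agent reveals the winner, not that any rational agent does.

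The paper resolves this cleanly and without any cost‑minimization concerns, because the proposition only bounds $\Ex{\sum_i X_i \bA_i \Ind{\bI_i=0}}$, not net utility (the paper even sets inspection costs to $0$ here). Take $n$ identical boxes with $X_i = 1$ w.p.\ $\eps$ and $0$ otherwise, $\Sigma = [n]$, and upon signal $i$ the mechanism inspects \emph{all} boxes $j \neq i$ and accepts $i$ (uninspected) iff every inspected box is $0$; otherwise it accepts nothing. If box $i$ is the unique high box, signaling $i$ yields $Y_i \ge 0$ to the agent while any other signal yields $0$ (the mechanism sees the high box and rejects), so every rational agent signals $i$ by the tie‑breaking assumption — no alignment of $Y$ is needed. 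In any other realization the uninspected‑selection term is $0$ regardless of the signal. Hence $\Ex{\sum_i X_i \bA_i \Ind{\bI_i=0}} = n\eps(1-\eps)^{n-1}$ while $\max_i \Ex{X_i} = \eps$, giving ratio $n(1-\eps)^{n-1} \to n$ as $\eps \to 0$. So: you had the right example in mind, but without pinning down the exhaustive‑inspection mechanism and noticing that it makes the incentive argument agent‑oblivious for free, the proof does not close.
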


This ratio shows the value of information gained from delegation, as without delegation the maximum utility that can be gained from uninspected solutions is $\max_{i\in[n]}E[X_i]$.

Towards the constant approximate mechanism, we first establish an upper bound for the utility obtained from an optimal mechanism. We say a solution $i$ is accepted without \emph{overinspection} if it is selected without inspecting any solution of the same or higher inspection cost than $c_i$. 
Defining indicator variables $\bU_i=\bigvee_{j:c_j \geq c_i} \bI_j$, we can say solution $i$ is accepted without overinspection if and only if $\bA_i(1-\bU_i)=1$.

Now, we prove the following lemma: \
\begin{lemma}\label{lm:ub-opt}
    For any mechanism, there exists an agent that can best-respond in a way such that the principal's utility from solutions accepted without overinspection will be at most $\max_{i\in[n]}{\Ex{X_i}}$, \ie
    \begin{align*}
        \sum_{i \in [n]}\Ex{X_i \bA_i \cdot \Ind{\bU_i = 0}} \le \max_{i \in [n]}\Ex{X_i}.
    \end{align*}
\end{lemma}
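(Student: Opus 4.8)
The plan is to exhibit, for an arbitrary mechanism, one fixed \emph{deterministic} agent against which the principal's expected utility from alternatives accepted without overinspection is at most $\max_{i\in[n]}\Ex{X_i}$; in fact this agent will depend only on the inspection costs $(c_i)_{i\in[n]}$. First I would relabel the alternatives so that $c_1\le c_2\le\cdots\le c_n$, breaking ties by a fixed total order $\prec$ (so $i\prec j$ whenever $c_i<c_j$). Under this labeling $\{j:c_j\ge c_i\}\supseteq\{j:j\ge i\}$, so $\bU_i=0$ forces none of $i,i+1,\dots,n$ to be inspected; in particular, whenever $i$ is accepted without overinspection, alternative $i$ is itself uninspected and every inspected alternative lies in $\{j:c_j<c_i\}\subseteq\{1,\dots,i-1\}$. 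The crucial consequence is that any inspection policy that accepts $i$ without overinspection reveals only values $X_j$ with $c_j<c_i$ along its run, so for a fixed mechanism \emph{the predicate ``some signal forces $i$ to be accepted without overinspection in the current realization'' is a function of $(X_1,\dots,X_{i-1})$ alone}, hence independent of $X_i$. I would then set $B$ to be the $\prec$-minimal alternative that some signal forces to be accepted without overinspection (and $B=\perp$ if none), so that each event $\{B=i\}$ depends only on $(X_1,\dots,X_{i-1})$ and is independent of $X_i$.

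Next I would take the agent to be deterministic with utilities $Y_1>Y_2>\cdots>Y_n>0$, i.e.\ strictly decreasing in $\prec$, so the agent strictly prefers cheaper-to-inspect alternatives and strictly prefers any alternative to the null outcome. For an arbitrary mechanism, fix a best response $\sigma^\ast$ of this agent (existence: in each realization the agent maximizes its utility over the finite set of alternatives it can force to be selected, with ties resolved in the principal's favor). If alternative $i$ is accepted without overinspection under $\sigma^\ast$, the agent obtains $Y_i$, so optimality gives $Y_i\ge Y_k$ for every alternative $k$ the agent could have forced; by strict monotonicity of $Y$ this means $i\preceq k$ for all such $k$, and since $i$ is itself forceable without overinspection we conclude $i=B$. (Equivalently, $\sigma^\ast$ does no better in this quantity than the auxiliary response that accepts $B$ without overinspection whenever $B\neq\perp$.) Hence $\bA_i\cdot\Ind{\bU_i=0}\le\Ind{B=i}$ pointwise for every $i$.

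Finally, using $X_i\ge0$ and the independence of $X_i$ from $\{B=i\}$ established above,
\[
\sum_{i\in[n]}\Ex{X_i\,\bA_i\cdot\Ind{\bU_i=0}}\ \le\ \sum_{i\in[n]}\Ex{X_i\cdot\Ind{B=i}}\ =\ \sum_{i\in[n]}\Ex{X_i}\cdot\Ex{\Ind{B=i}}\ \le\ \max_{i\in[n]}\Ex{X_i},
\]
where the last step uses $\sum_{i\in[n]}\Ind{B=i}\le1$ since $B$ takes at most one value. If $\bI_{\Del}=0$ no agent is involved and the identical computation applies with $B$ the alternative that the fixed policy accepts without overinspection, if any.

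I expect the main obstacle to be making the information step fully rigorous: carefully arguing that the predicate ``some signal forces $i$ to be accepted without overinspection'' is measurable with respect to $(X_1,\dots,X_{i-1})$, and doing the bookkeeping for ties in the inspection costs so that the discrepancy between ``cost at least $c_i$'' and ``index at least $i$'' never leaks information about $X_i$. The choice of agent, the one-line best-response argument, and the final summation over the disjoint events $\{B=i\}$ should then be routine.
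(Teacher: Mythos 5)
Your proof is correct and takes essentially the same approach as the paper: both fix a deterministic agent with utilities strictly decreasing in inspection cost, identify the $\prec$-minimal alternative $B$ forceable without overinspection, argue that the event $\{B=i\}$ depends only on boxes strictly cheaper than $i$ (hence is independent of $X_i$), and conclude by summing over the disjoint events. The only cosmetic difference is presentational: the paper introduces an auxiliary agent response and shows it pointwise dominates the best response's contribution, then proves independence for that auxiliary response, whereas you bound $\bA_i\Ind{\bU_i=0}\le\Ind{B=i}$ directly against the best response — a marginally tidier packaging of the same argument, and your explicit remark about tie-handling in the cost ordering is exactly the care the paper takes implicitly via its relabeling convention.
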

\begin{proof}
Assume WLOG that the solutions are numbered in order of increasing inspection costs, \ie $c_i\leq c_j$ for $i<j$, with ties broken arbitrarily. We consider an agent with deterministic utility values $y=(y_1,\ldots,y_n)$ such that $y_i > y_j$ when $i<j$.

For a given realization of the principal's utilities $x=(x_1,\ldots,x_n)$, let $\unsig(x)$ be the set of signals that would lead to a solution being accepted without overinspection, and for any signal $\sigma \in \unsig(x)$ let $acc(x,\sigma)$ be the solution accepted using that signal.
Now, we consider the following response by the agent given realization $x$. If $\unsig(x)$ is non-empty, the agent chooses a signal $\sigma \in \unsig(x)$ minimizing $acc(x,\sigma)$ and consequently maximizing the agent's utility. Otherwise, if $\unsig(x)$ is empty, the agent selects an arbitrary signal.

We argue that the principal's utility gained from solutions accepted without overinspection is higher for this response, compared to any best-response of the agent. For any realization $x$, if the agent's best-response chooses a signal in $\unsig(x)$ that leads to a solution being selected without overinspection, the same solution must be selected by our response, as any other solution being selected based on a signal in $\unsig(x)$ has lower utility for the agent. If the best-response does not select a solution without overinspection, then the corresponding utility $\sum_{i \in [n]}{X_i \bA_i \cdot \Ind{\bU_i = 0}}$ is simply $0$ for this realization since $\bU_i =1$ when $\bA_i=1$,
and our statement holds. Therefore, we can use this response to upper bound the principal's utility given any best-response of the agent.

Now, for this response, consider any realization $x=(x_1,\ldots,x_n)$ where solution $i$ is accepted without overinspection. We show that for any other realization $x'=(x_1,\ldots,x_{i-1},x_i',\ldots,x_n)$ that differs from $x$ only in $x_i$, we will also accept solution $i$ without overinspection. Clearly, the same signal can be used to select solution $i$ without overinspection, so $\unsig(x')$ is not empty, and a signal in $\unsig(x')$ is used by the agent. 
Then, for solution $i$ not to be selected, a solution $j$ with $y_j>y_i$ needs to be chosen without overinspection using a signal $\sigma' \in \unsig(x')$. Since $y_j>y_i$, we have $j<i$ and therefore $c_j\leq c_i$. This means that $\sigma'$ should not inspect solution $i$ for solution $j$ to be selected without overinspection. But then, since $x$ and $x'$ differ only in $x_i$, using the same signal given $x$ should also lead to solution $j$ being accepted without overinspection, which is a contradiction. So, for any such $x'$, solution $i$ will be accepted without overinspection. 

Since changing $x_i$ can not affect whether solution $i$ is selected without overinspection or not, we can conclude that $\bA_i\Ind{\bU_i = 0}$, which represents solution $i$ being selected without overinspection, can be written as a function of the $X_j$ variables except $X_i$. Therefore,  $\bA_i\Ind{\bU_i = 0}$ is independent of $X_i$. We can now use this to show that a utility of at most $\max_{i\in[n]}\Ex{X_i}$ can be gained from the solutions accepted without overinspection:

\begin{align*}
    \sum_{i \in [n]}\Ex{X_i \bA_i \cdot \Ind{\bU_i = 0}}&=\sum_{i \in [n]}\Ex{X_i}\Ex{\bA_i\cdot \Ind{\bU_i = 0}}\tag{Independence}\\
    &\leq \sum_{i \in [n]}\Ex{X_i}\Pr{\bA_i=1}\\
    &\leq \sum_{i\in[n]}\Pr{\bA_i=1}\max_{j \in [n]}\Ex{X_j} \tag{$\Ex{X_i} \leq \max_{j \in [n]}\Ex{X_j}$}\\
    &\leq \max_{i \in [n]}\Ex{X_i} \tag{$\sum_{i\in [n]}\Pr{\bA_i=1} \leq 1$},
\end{align*}
where the first inequality follows from the fact that $\Ind{\bU_i=0}\leq 1$ and $\Ex{\bA_i}=\Pr{\bA_i=1}$.
\end{proof}
Next, we use this lemma to prove an upper bound for the utility obtained by any mechanism. 
\begin{lemma} \label{lm:opt-fullub}
    For any mechanism, there exists an agent such that the utility obtained by the principal given the agent's best-response is at most
    $\max_{i\in[n]} \Ex{X_i} + \Ex{\max_{i\in[n]} (X_i - c_i)^+}$. That is,
    $$
    \Ex{\sum_{i\in[n]} X_i \bA_i - \max_{i\in[n]}c_i \bI_i} \leq \max_{i\in[n]} \Ex{X_i} + \Ex{\max_{i\in[n]} (X_i - c_i)^+}.
    $$
\end{lemma}
\begin{proof}
Let us denote the left-hand side of the inequality in the theorem  as $\ub$. We can use the fact that at most one $\bA_i$ can be $1$ to upper bound $\ub$ as follows:
$$
\ub = \Ex{\sum_{i\in[n]} X_i \bA_i - \max_{i\in[n]}c_i \bI_i} \leq
    \Ex{\sum_{i\in[n]} (X_i \bA_i - \bA_i (\max_{j\in[n]}c_j \bI_j))}.
$$
Now, we have a sum over $i$ and for each index $i$, the sum is only non-zero when $\bA_i$ is $1$ and solution $i$ is selected. We use the notion of selection without overinspection we introduced to break the sum down into two cases:
\begin{align*}
    \ub&=\Ex{\sum_{i\in[n]} (X_i \bA_i - \bA_i (\max_{j\in[n]}c_j \bI_j))}\\&=\Ex{\sum_{i\in[n]} (X_i \bA_i - \bA_i (\max_{j\in[n]}c_j \bI_j))(\bU_i + (1-\bU_i))}\\
    &\leq\Ex{\sum_{i\in[n]} (X_i \bA_i - \bA_i (\max_{j\in[n]}c_j \bI_j))\bU_i} + \Ex{\sum_{i\in[n]} X_i \bA_i\Ind{\bU_i=0}}.
\end{align*}    
In the second case, we disregard the inspection costs to obtain an upper bound. Next, in the first case, since the summation term is only non-zero when $\bU_i=1$, we can safely replace the maximum cost of inspected solutions with $c_i$. So, we can say
\begin{align*}
    \ub&\leq \Ex{\sum_{i\in[n]} (X_i \bA_i - \bA_i c_i)\bU_i} + \Ex{\sum_{i\in[n]} X_i \bA_i\Ind{\bU_i=0}}\\
    &\leq \Ex{\sum_{i\in[n]} \bA_i (X_i  - c_i)^+} + \Ex{\sum_{i\in[n]} X_i \bA_i\Ind{\bU_i=0}}\\
    &\leq \Ex{\max_{i\in[n]} (X_i  - c_i)^+} + \Ex{\sum_{i\in[n]} X_i \bA_i\Ind{\bU_i=0}}.
\end{align*}
Finally, we can use Lemma \ref{lm:ub-opt} to replace the second term and achieve the desired upper bound:
\begin{align*}
    \ub&\leq \Ex{\max_{i\in[n]} (X_i  - c_i)^+} + \max_{i\in[n]} \Ex{X_i}.
\end{align*}
\end{proof}

To obtain a $3$-approximate mechanism, we mainly consider a maximal mechanism that computes the expected value of two mechanisms and takes the one that has the higher value.
Those two mechanisms are fairly simple: (i) SPMI with an appropriate threshold, and (ii) that simply selects the one that has the largest expected utility, \ie selecting $\argmax_{i \in [n]}\Ex{X_i}$ without inspection.\footnote{The second mechanism will be replaced with PTAS policy for PNOI in Section~\ref{sec:general}}


We first prove a lemma stating that the the variant of the SPMI achieves half of $\Ex{\max_{i \in [n]} \parans{X_i - c_i}^+}$.
\begin{lemma}\label{lm:delegation}
    There exists a SPMI, which inspects the proposed element and accepts it based on a single threshold $\tau$, that achieves $\frac{1}{2}\Ex{\max_{i \in [n]} \parans{X_i - c_i}^+}$.\footnote{We will disregard the computation required to compute $\tau$ as it's beyond our interest. In fact, this can be efficiently computable for discrete random variables, and can be approximately computable for continuous random variables.}
\end{lemma}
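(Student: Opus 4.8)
The plan is to instantiate the SPMI as a single‑threshold mechanism on the \emph{net values} $V_i := (X_i-c_i)^+$ and reduce its analysis to a fixed‑threshold, \emph{order‑oblivious} prophet inequality.

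\emph{The mechanism.} Since the delegation cost is zero, the SPMI $\Mec_1$ always delegates at no cost. It announces the threshold $\tau := \tfrac12\,\Ex{\max_{i\in[n]}(X_i-c_i)^+}$; given the agent's proposal $\om^*$, if $\om^*=\perp$ it does nothing, and otherwise it inspects $\om^*$ (learning its identity $i^*$ and value $x_{i^*}$) and accepts it iff $x_{i^*}-c_{i^*}\ge\tau$, so that the ``eligible set'' is the single threshold $\tau$ applied to the post‑inspection net value. If $\tau=0$ the claimed bound is vacuous, so assume $\tau>0$.

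\emph{Reducing the worst‑case agent to a prophet process.} Fix any agent distribution $D_Y$ and a realization $(x,y)$, and let $S(x):=\{\,i : x_i-c_i\ge\tau\,\}$ be the set of elements that would be accepted if proposed. Proposing an element of $[n]\setminus S(x)$ or proposing $\perp$ both yield the agent $0$, whereas proposing $i\in S(x)$ yields the agent $y_i\ge 0$; using the assumption that ties are broken in the principal's favor, it follows that (i) the principal is never charged for a proposal that is then rejected, and (ii) whenever $S(x)\neq\varnothing$ the agent proposes some $i^*\in S(x)$ maximizing $y_i$ over $S(x)$, with any remaining tie resolved toward the largest value of $x_i-c_i$. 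Hence the principal's payoff on this realization is $0$ when $S(x)=\varnothing$ and is $x_{i^*}-c_{i^*}\ge\tau$ otherwise. Letting $\pi_y$ be the order on $[n]$ obtained by sorting the $y_i$ in decreasing order (ties by index), the first element of $S(x)$ in the order $\pi_y$ is exactly the element of $S(x)$ with largest $y_i$, so by (ii) the selected $i^*$ has net value at least that of this element. Since on $S(x)$ we have $x_i-c_i=V_i\ge\tau>0$, we conclude that the principal's utility is at least the expectation of the value collected by the following process: examine $V_1,\dots,V_n$ in the ($Y$‑measurable) order $\pi_Y$ and keep the first one that is at least $\tau$.

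\emph{Order‑oblivious prophet inequality and conclusion.} It remains to show that for independent nonnegative $V_1,\dots,V_n$ and \emph{any} fixed examination order, the rule ``keep the first $V_i\ge\tau$'' with $\tau=\tfrac12\Ex{\max_i V_i}$ collects at least $\tfrac12\Ex{\max_i V_i}$ in expectation. Relabelling so the order is $1,\dots,n$ and setting $Q:=\prod_j \Pr{V_j<\tau}=\Pr{\max_j V_j<\tau}$, the event that index $i$ is reached equals $\{V_1<\tau,\dots,V_{i-1}<\tau\}$, of probability $\prod_{j<i}\Pr{V_j<\tau}\ge Q$ regardless of the order. Writing the collected value as $\tau\cdot\Ind{\max_i V_i\ge\tau}+\sum_i (V_i-\tau)^+\Ind{i\text{ reached}}$ and using that $(V_i-\tau)^+$ is independent of the event ``$i$ reached'', its expectation is at least $\tau(1-Q)+Q\sum_i\Ex{(V_i-\tau)^+}\ge \tau(1-Q)+Q\bigl(\Ex{\max_i V_i}-\tau\bigr)$, which simplifies to exactly $\tfrac12\Ex{\max_i V_i}$ once $\tau=\tfrac12\Ex{\max_i V_i}$ is substituted. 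Since this bound holds for every fixed order and does not depend on $y$, averaging the inequality of the previous paragraph over $Y$ gives $\Ex{\Mec_1}\ge\tfrac12\Ex{\max_{i\in[n]}(X_i-c_i)^+}$, which is the claim.

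\emph{Main obstacle.} The prophet‑inequality computation is routine. The delicate step is the reduction: turning the \emph{worst‑case} (agent‑oblivious) agent into a fixed‑threshold prophet process, where one must use the principal‑favoring tie‑break both to rule out wasteful inspections of to‑be‑rejected proposals and to guarantee that, whenever an eligible element exists, the adversarial agent's choice is still at least as good for the principal as the first eligible element in the $Y$‑induced order—so that the order‑oblivious prophet bound can be applied pointwise in $Y$.
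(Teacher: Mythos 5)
Your proof is correct, and it takes a genuinely different route from the paper's. The paper's proof is a short reduction: it couples each $X_i$ with $X_i'=(X_i-c_i)^+$, sets all inspection costs to zero in the modified instance, observes that this coincides with the Kleinberg--Kleinberg setting, and invokes their single-threshold SPM guarantee (Theorem~\ref{thm:kk18}) as a black box; the coupling shows realization-by-realization that the SPMI on the original instance earns exactly what the SPM earns on the shifted instance. You instead unfold the black box: you reduce the worst-case agent to a fixed-threshold prophet process whose examination order $\pi_Y$ is induced by the agent's (independent) utilities, and then reprove the order-oblivious prophet inequality from scratch on the shifted values $V_i=(X_i-c_i)^+$. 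Both proofs ride on the same underlying observation---that the agent's best-response behaves like an adversarial but $V$-independent examination order, so the fixed threshold $\tau=\tfrac12\Ex{\max_i V_i}$ already gives a $1/2$ guarantee---but yours makes the mechanism of that observation explicit while the paper delegates it to KK18. What the paper's route buys is brevity and modularity; what yours buys is a self-contained argument that makes plain exactly where the tie-breaking assumption is used (to rule out wasted inspections on to-be-rejected proposals, and to dominate the ``first eligible in $\pi_Y$'' element pointwise). One small remark worth making explicit in your write-up: the step ``averaging over $Y$'' uses that, conditionally on $Y$, the $V_i$'s are still independent with their unconditional marginals; this is exactly the model's assumption that $X_i$ and $Y_i$ are independent, and it is the same assumption that powers the paper's invocation of KK18, so the two proofs have the same hypotheses.
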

\begin{proof}
    Given an instance of DCIC with $X = (X_1,\ldots, X_n)$, consider an instance of DCIC with $X' = (X_1',\ldots, X_n')$ such that the distributions are shifted to be $X_i' = \parans{X_i - c_i}^+$, but the inspection costs are set to be $c_i = 0$ for $i \in [n]$.
    Since the inspection is given at free, this coincides with the setting without inspection cost by~\cite{kleinberg2018delegated}.
    Thus, in this problem instance, by Lemma~\ref{thm:kk18}, we know that there exists a SPMI $\Mec'$ with a threshold $\tau$ that achieves $\Ex{\Mec'} \ge \frac{1}{2}\Ex{\max_{i \in [n]} \parans{X_i - c_i}^+}$.
    
    Now, in the original instance, consider the SPMI that accepts a solution $i$ after inspection if $X_i - c_i \ge \tau$.
    We claim that the principal's utility in $\Mec$ is the same as in $\Mec'$ for each realization of $X$ and $X'$ assuming that $X_i$ and $X_i'$ are coupled so that $X_i' = \parans{X_i - c_i}^+$ for every $i \in [n]$.
    If there exists an eligible solution $i$ such that $X_i-c_i \geq \tau$, we would also have $X_i'=X_i-c_i\geq \tau$ and the agent would propose his most preferred eligible solution $i$ in both cases, with the same utility of $X_i-c_i=X_i'$. Otherwise, if no eligible solution exists, the agent submits the signal $\perp$ in $\Mec$ resulting in a utility of $0$ for the principal since the agent's utility is $0$ regardless of the signal and he behaves in favor of the principal in case of ties. In $\Mec'$, the principal's utility will also be $0$ as no eligible solution exists.  
    Thus, we have $\Ex{\Mec} = \Ex{\Mec'} \ge \frac{1}{2}\Ex{\max_{i \in [n]} \parans{X_i - c_i}^+}$.
\end{proof}

Now, the following lemma shows that the maximal mechanism described above has the desired lower bound on the principal's utility.
\begin{lemma}\label{lm:simple}
    There exists a mechanism $\Mec$ that obtains a value of at least 
    $$\Ex{\Mec} \ge \max \left( \max_{i \in [n]}\Ex{X_i}, \frac{1}{2} \Ex{\max_{i \in [n]} (X_i - c_i)^+}\right).$$
\end{lemma}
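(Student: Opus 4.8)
The plan is to instantiate the ``maximal mechanism'' explicitly and verify the two lower bounds separately, then conclude that taking the better of the two mechanisms gives the maximum. Concretely, define $\Mec_2$ to be the mechanism that never delegates (or delegates and ignores the signal) and simply selects $i^\star = \argmax_{i\in[n]}\Ex{X_i}$ without inspecting anything; since no inspection cost is paid and no delegation cost is incurred (we are in the costless-delegation regime), its expected utility is exactly $\Ex{X_{i^\star}} = \max_{i\in[n]}\Ex{X_i}$. Define $\Mec_1$ to be the SPMI with threshold $\tau$ guaranteed by Lemma~\ref{lm:delegation}, so that $\Ex{\Mec_1}\ge \tfrac12\Ex{\max_{i\in[n]}(X_i-c_i)^+}$. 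Let $\Mec$ be the maximal mechanism that computes $\Ex{\Mec_1}$ and $\Ex{\Mec_2}$ (these are fixed numbers once $D_X$ and the costs are given) and runs whichever is larger; then trivially $\Ex{\Mec} = \max(\Ex{\Mec_1},\Ex{\Mec_2}) \ge \max\bigl(\max_{i\in[n]}\Ex{X_i},\ \tfrac12\Ex{\max_{i\in[n]}(X_i-c_i)^+}\bigr)$, which is exactly the claim.

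The only points that need care are: (i) $\Mec_2$ is a legitimate mechanism in the DCIC framework — it corresponds to $\Mec_2 = (\bI_{\Del},P,\Sig)$ where $P$ is the trivial policy ``immediately select $i^\star$ without inspection'' and, if one insists on $\bI_{\Del}=1$, $\Sig$ maps every signal to this same $P$; since $c_{\Del}=0$, delegating costs nothing, so this is feasible and its utility is unaffected by the agent's signal, hence robust to the worst-case agent; (ii) the utility of $\Mec_2$ really is $\max_{i\in[n]}\Ex{X_i}$ with no loss, because $\bA_{i^\star}=1$ deterministically, all $\bI_i=0$, and $\bI_{\Del}c_{\Del}=0$; and (iii) both guarantees hold against the worst-case agent's distribution $D_Y$, which for $\Mec_2$ is immediate (the outcome ignores the agent entirely) and for $\Mec_1$ is already built into the statement of Lemma~\ref{lm:delegation}. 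I would also remark that ``maximal mechanism'' is itself a valid deterministic mechanism: the principal knows $D_X$ and the costs, so she can compute which of the two candidates is better and commit to it up front.

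I do not anticipate a genuine obstacle here — this lemma is essentially bookkeeping that packages Lemma~\ref{lm:delegation} together with the trivial second mechanism, and its purpose is to feed into the final $3$-approximation by comparison with the upper bound $\Ex{\Opt}\le \max_{i\in[n]}\Ex{X_i}+\Ex{\max_{i\in[n]}(X_i-c_i)^+}$ from Lemma~\ref{lm:opt-fullub}. The one subtlety worth stating cleanly is the reduction in (i): one must note explicitly that a mechanism with $\bI_{\Del}=0$ and a degenerate inspection policy is allowed, so that $\max_{i\in[n]}\Ex{X_i}$ is genuinely attainable; this is exactly the observation already made at the start of the costless-delegation subsection (``any mechanism $\Mec=(0,P,\Sig)$ with $\bI_{\Del}=0$ can be simulated by $\Mec'=(1,P,\Sig')$ where $\Sig'$ ignores the signal''), so I would simply cite that remark. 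Thus the proof is: exhibit $\Mec_1$ and $\Mec_2$, note each achieves its respective bound against any agent, and let $\Mec$ be the pointwise-better of the two.
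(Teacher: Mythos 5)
Your proof is correct and is essentially the paper's own argument: instantiate $\Mec_2$ as ``pick $\argmax_i\Ex{X_i}$ without inspection'' (worth exactly $\max_i\Ex{X_i}$), instantiate $\Mec_1$ as the SPMI from Lemma~\ref{lm:delegation}, and let the maximal mechanism commit up front to whichever has the larger expected value. The extra bookkeeping you supply (that $\Mec_2$ is a legitimate mechanism in the DCIC framework and that both bounds hold against the worst-case agent) is more careful than the paper's one-line justification but does not change the substance.
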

\begin{proof}
    Consider a mechanism that simply selects an element that maximizes $\max_{i \in [n]}\Ex{X_i}$ without delegation nor inspection.
    
    Obviously the expected utility from this mechanism is $\max_{i \in [n]}\Ex{X_i}$ since the principal suffers no inspection cost nor delegation cost at all.
    Now, if $\max_{i \in [n]}\Ex{X_i}$, is larger than (or equal to) $\frac{1}{2} \Ex{\max_{i \in [n]} (X_i - c_i)^+}$, we run the mechanism above.
    Otherwise, we run the mechanism suggested by Lemma~\ref{lm:delegation}.
    Thus, this maximal mechanism achieves the maximum over two quantities.
\end{proof}
We are now ready to prove the following theorem.
\begin{theorem}\label{thm:costless-3pax}
    There exists a mechanism that achieves an approximation ratio of 3 for the delegated choice with inspection cost if the cost of delegation is zero. 
\end{theorem}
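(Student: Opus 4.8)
The plan is to combine the upper bound on the optimal mechanism from Lemma~\ref{lm:opt-fullub} with the lower bound on the maximal mechanism from Lemma~\ref{lm:simple} through a short two-case argument. Write $A = \max_{i\in[n]}\Ex{X_i}$ and $B = \Ex{\max_{i\in[n]}(X_i-c_i)^+}$ for brevity.

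First I would observe that since $c_{\Del}=0$ and all inspection costs are nonnegative, for any mechanism and any agent's best-response the realized utility satisfies $\sum_{i}X_i\bA_i - \sum_i c_i\bI_i - \bI_{\Del}c_{\Del} \le \sum_i X_i\bA_i - \max_{i}c_i\bI_i$, because at most one $\bA_i$ is nonzero and $\max_i c_i \bI_i \le \sum_i c_i \bI_i$. Hence Lemma~\ref{lm:opt-fullub} applied to $\Opt$ exhibits an agent distribution $D_Y$ under which $\Opt$'s expected utility is at most $A + B$; since $\Exu{D_X}{\Opt}$ is the infimum over agent distributions, this gives $\Exu{D_X}{\Opt} \le A + B$ for every $D_X$.

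Next, Lemma~\ref{lm:simple} supplies a mechanism $\Mec$ — the maximum of the SPMI from Lemma~\ref{lm:delegation} and the no-inspection mechanism that selects $\argmax_{i\in[n]}\Ex{X_i}$ — whose worst-case-agent utility satisfies $\Exu{D_X}{\Mec} \ge \max(A, \tfrac12 B)$. I would then split into two cases. If $A \ge \tfrac12 B$, then $\Exu{D_X}{\Opt} \le A + B \le A + 2A = 3A \le 3\,\Exu{D_X}{\Mec}$. If instead $A < \tfrac12 B$, then $\Exu{D_X}{\Opt} \le A + B \le \tfrac12 B + B = \tfrac32 B \le 3\,\Exu{D_X}{\Mec}$. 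In either case $\Exu{D_X}{\Mec} \ge \tfrac13\,\Exu{D_X}{\Opt}$ for all $D_X$, so $3 \cdot \min_{D_X\in\cD_X}\Exu{D_X}{\Mec}/\Exu{D_X}{\Opt} \ge 1$, i.e.\ $\Mec$ is a $3$-approximation, as claimed.

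Since all the substantive work is already carried out in Lemmas~\ref{lm:ub-opt}--\ref{lm:simple}, there is no real obstacle here; the only point requiring a little care is bookkeeping with the worst-case-agent quantifiers — namely that the agent witnessing the upper bound in Lemma~\ref{lm:opt-fullub} may be chosen specifically against $\Opt$, whereas the lower bound on $\Mec$ must hold against every agent — together with the remark that the constant $3$ is precisely what makes the two cases balance, so this argument cannot be pushed below $3$, consistent with the tightness asserted in Theorem~\ref{thm:tightness}.
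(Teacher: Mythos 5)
Your proof is correct and follows essentially the same two-case argument as the paper: upper-bound $\Ex{\Opt}$ by $\max_i\Ex{X_i} + \Ex{\max_i(X_i-c_i)^+}$ via Lemma~\ref{lm:opt-fullub}, lower-bound $\Ex{\Mec}$ by $\max(\max_i\Ex{X_i}, \tfrac12\Ex{\max_i(X_i-c_i)^+})$ via Lemma~\ref{lm:simple}, and split on whether $\max_i\Ex{X_i}$ or $\tfrac12\Ex{\max_i(X_i-c_i)^+}$ is larger. Your explicit remark about the direction of the quantifiers over agent distributions is a helpful clarification that the paper leaves implicit.
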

\begin{proof}
    By Lemma \ref{lm:opt-fullub}, we know that for an optimal mechanism \Opt, we have
    $$\Ex{\Opt} \le \max_{i \in [n]}\Ex{X_i} + \Ex{\max_{i \in [n]} (X_i - c_i)^+}.$$

    On the other hand, by Lemma~\ref{lm:simple}, we know that there exists a mechanism with
    \begin{align}
        \Ex{\Mec} \ge \max \left( \max_{i \in [n]}\Ex{X_i}, \frac{1}{2} \Ex{\max_{i \in [n]} (X_i - c_i)^+}\right).\label{eq:06271208}    
    \end{align}
    
    Thus, if $\frac{1}{2} \Ex{\max_{i \in [n]} (X_i - c_i)^+} \ge \max_{i \in [n]}\Ex{X_i}$, we obtain
    \begin{align*}
        \Ex{\Mec} \ge \frac{1}{2} \Ex{\max_{i \in [n]} (X_i - c_i)^+}, \Ex{\Opt} \le \frac{3}{2}\Ex{\max_{i \in [n]} (X_i - c_i)^+},
    \end{align*}
    which implies the $3$-approximation ratio.
    Otherwise, we have
    \begin{align*}
        \Ex{\Mec} \ge \max_{i \in [n]}\Ex{X_i}, \Ex{\Opt} \le 3\max_{i \in [n]}\Ex{X_i}
    \end{align*}
    which also implies the $3$-approximation ratio.
    
    

    
\end{proof}


Interestingly, we observe that the analysis carries over to the more complicated setting in which the cost function is rather combinatorial, not additive.

\begin{corollary}\label{cor:combinatorial}
    Consider a monotone combinatorial cost setting under which the cost of inspecting a set of solutions is given by a monotone set function $c: 2^{[n]} \to \R_{\ge 0}$. If the delegation is costless, then there exists a $3$-approximate mechanism.
\end{corollary}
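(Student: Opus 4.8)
The plan is to reuse verbatim the maximal mechanism of Theorem~\ref{thm:costless-3pax} --- the better of a suitably thresholded SPMI (which inspects only the single proposed element) and the no-inspection mechanism that picks $\argmax_{i\in[n]}\Ex{X_i}$ --- after reinterpreting every per-element cost $c_i$ appearing in the analysis as the \emph{singleton cost} $\tilde c_i:=c(\{i\})$. I would first fix the (standard) normalization $c(\emptyset)=0$ and record the one structural fact that takes the place of additivity: for any inspected set $I$, monotonicity of $c$ gives
\[
c(I)\ \ge\ \max_{j\in I}c(\{j\})\ =\ \max_{i\in[n]}\tilde c_i\,\bI_i,
\]
the right-hand side being $0$ when $I=\emptyset$. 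The rest of the argument, I claim, never sees whether the cost is additive or combinatorial.

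Concretely, I would first re-run Lemma~\ref{lm:ub-opt} with overinspection defined relative to the singleton costs: order solutions so that $\tilde c_1\le\cdots\le\tilde c_n$ and set $\bU_i=\bigvee_{j:\tilde c_j\ge\tilde c_i}\bI_j$. The adversarial agent with deterministic utilities $y_1>\cdots>y_n$ and the entire independence argument --- pick the best-response that minimizes the index of the accepted non-overinspected solution, then show $\bA_i\Ind{\bU_i=0}$ is a function of $(X_j)_{j\ne i}$ alone --- carries over word for word, since it only uses ``$y_j>y_i\Rightarrow j<i\Rightarrow\tilde c_j\le\tilde c_i$'' and the fact that a signal selecting $j$ without overinspection cannot have inspected any element of singleton cost $\ge\tilde c_j$; neither statement cares how inspection costs aggregate across a set. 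This reproduces $\sum_{i\in[n]}\Ex{X_i\bA_i\Ind{\bU_i=0}}\le\max_{i\in[n]}\Ex{X_i}$.

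Next I would redo Lemma~\ref{lm:opt-fullub}. Writing $\Ex{\Opt}=\Ex{\sum_i X_i\bA_i-c(I)}$ and applying the monotonicity bound above together with ``at most one $\bA_i=1$'',
\[
\Ex{\Opt}\ \le\ \Ex{\sum_{i\in[n]}\parans{X_i\bA_i-\bA_i\max_{j\in[n]}\tilde c_j\bI_j}}.
\]
Splitting each summand through $\bU_i+(1-\bU_i)$, discarding the cost term on $\{\bU_i=0\}$, and on $\{\bU_i=1\}$ replacing $\max_j\tilde c_j\bI_j$ by the no-larger quantity $\tilde c_i$ --- exactly the steps of Lemma~\ref{lm:opt-fullub} --- gives
\[
\Ex{\Opt}\ \le\ \Ex{\max_{i\in[n]}(X_i-\tilde c_i)^+}+\sum_{i\in[n]}\Ex{X_i\bA_i\Ind{\bU_i=0}}\ \le\ \Ex{\max_{i\in[n]}(X_i-\tilde c_i)^+}+\max_{i\in[n]}\Ex{X_i},
\]
the last inequality by the combinatorial Lemma~\ref{lm:ub-opt} just established.

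Finally, the mechanism side requires no change. The SPMI inspects only the proposed element $\om^*$, so it pays $c(\{\om^*\})=\tilde c_{\om^*}$ (and $c(\emptyset)=0$ when $\om^*=\perp$); applying Theorem~\ref{thm:kk18} to the shifted costless instance $X_i'=(X_i-\tilde c_i)^+$ yields, as in Lemma~\ref{lm:delegation}, an SPMI with $\Ex{\Mec}\ge\frac{1}{2}\Ex{\max_{i\in[n]}(X_i-\tilde c_i)^+}$, while the no-inspection mechanism incurs cost $c(\emptyset)=0$ and yields $\max_{i\in[n]}\Ex{X_i}$; taking the larger of the two (as in Lemma~\ref{lm:simple}) and combining with the upper bound from Lemma~\ref{lm:opt-fullub} reproduces the $3$-approximation exactly as in Theorem~\ref{thm:costless-3pax}. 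There is essentially no obstacle here --- the whole point is that the analysis is robust to the cost structure --- and the only non-routine check is the monotonicity bound $c(I)\ge\max_{j\in I}c(\{j\})$, which is the combinatorial replacement for the trivial inequality $\sum_i c_i\bI_i\ge\max_i c_i\bI_i$ used in the additive proof; everything else is the relabeling $c_i\mapsto c(\{i\})$.
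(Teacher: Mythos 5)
Your proof is correct and follows essentially the same route as the paper's: define $\tilde c_i = c(\{i\})$, observe monotonicity gives $c(I)\ge\max_{j\in I}c(\{j\})=\max_i\tilde c_i\bI_i$ so Lemmas~\ref{lm:ub-opt} and~\ref{lm:opt-fullub} apply with singleton costs, and note the maximal mechanism performs at most one inspection so its cost is unaffected. The paper is a bit terser (it simply remarks that Lemma~\ref{lm:ub-opt} does not depend on the cost structure rather than re-running the overinspection argument), but the decomposition, the key monotonicity inequality, and the mechanism-side observation are identical.
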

\begin{proof}
        Let $c_i = c(\{i\})$ be the cost of inspecting element $i$ by itself. Lemma \ref{lm:ub-opt} is not dependent on the cost function used for the principal, so it can still be applied when considering costs $c_i$. Consequently, Lemma \ref{lm:opt-fullub} can be used in this case, to show that
        $$
        \Ex{\sum_{i\in[n]} X_i \bA_i - \max_{i\in[n]}c_i \bI_i} \leq \max_{i\in[n]} \Ex{X_i} + \Ex{\max_{i\in[n]} (X_i - c_i)^+}.
        $$
        Now, we can see that for any monotone function $c$, the principal's utility can be upper bounded as
        \begin{align*}
            \Ex{(\sum_{i\in[n]} X_i\bA_i) - c(\{j\mid \bI_j=1\})} &\leq \Ex{\sum_{i\in[n]} X_i\bA_i - \max_{\bI_j=1}c(\{j\})}\\
            &=\Ex{\sum_{i\in[n]} X_i\bA_i - \max_{i\in[n]} c_i\bI_i}.
        \end{align*}
        Therefore, the bound in Lemma \ref{lm:opt-fullub} can be used as an upper bound for the principal's utility.

        On the other hand, our algorithm when delegation is costless uses at most one inspection, so its cost is unchanged compared to the setting with additive costs $c_i$. Therefore, the same algorithm achieves a 3-approximation in this setting too. 
\end{proof}
Additionally, the following theorem shows an improved approximation ratio of $2$ when all inspection costs are the same, whose proof follows from considering the mechanism in Theorem~\ref{thm:main}.
\begin{proposition}\label{prop:identical}
    If $c_i = c_j$ for any $i, j \in [n]$, then there exists a $2$-approximate mechanism for DCIC with costless delegation.
\end{proposition}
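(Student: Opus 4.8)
The plan is to sharpen the upper bound of Lemma~\ref{lm:opt-fullub} when all inspection costs coincide, say $c_i = c$ for all $i\in[n]$, and then match it with a variant of the maximal mechanism. First I would revisit the proof of Lemma~\ref{lm:opt-fullub}. When every $c_i = c$, the notion of overinspection simplifies dramatically: $\bU_i = \bigvee_{j:\,c_j\ge c_i}\bI_j = \bigvee_{j\in[n]}\bI_j$ is simply the indicator that \emph{any} inspection has occurred. So the decomposition $\ub \le \Ex{\sum_i \bA_i(X_i - c)^+} + \Ex{\sum_i X_i\bA_i\Ind{\bU_i=0}}$ becomes: either some box was inspected (and then the net utility of the selected box $i$ is at most $(X_i - c)^+ \le \max_j (X_j - c)^+$), or nothing was inspected at all, in which case Lemma~\ref{lm:ub-opt} still applies and caps the contribution at $\max_i\Ex{X_i}$. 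The key gain to extract is that these two cases are mutually exclusive \emph{for each realization}, not just in expectation --- so in fact $\Ex{\Opt} \le \Ex{\max_i \max\big((X_i - c)^+,\ \text{(no-inspection value)}\big)}$. I would try to push this to the cleaner bound $\Ex{\Opt}\le \max\big(\Ex{\max_i(X_i-c)^+} + \text{something small},\ \dots\big)$, or more directly argue that the optimal mechanism's value is at most $\Ex{\max_i (X_i-c_i)^+} + \max_i\Ex{X_i}$ \emph{but} the two terms cannot both be ``active'', giving a $2$-approximation against the \emph{max} of the two rather than the sum.

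Concretely, the steps I would carry out are: (1) specialize Lemma~\ref{lm:ub-opt} and Lemma~\ref{lm:opt-fullub} to the uniform-cost case and observe the realization-wise dichotomy between ``an inspection happened'' and ``no inspection happened''; (2) conclude an upper bound of the form $\Ex{\Opt} \le \Ex{\max_i(X_i-c)^+} + \max_i\Ex{X_i}$ where, crucially, one can additionally bound $\max_i\Ex{X_i} \le \Ex{\max_i(X_i-c)^+} + c$ is \emph{not} what we want --- instead I want to show that on any realization the principal gets at most $\max\big(\max_i(X_i-c)^+,\ \max_i X_i \text{ with no cost paid only when no inspection}\big)$; (3) invoke the maximal mechanism of Lemma~\ref{lm:simple}, which already guarantees $\Ex{\Mec}\ge\max\big(\max_i\Ex{X_i},\ \tfrac12\Ex{\max_i(X_i-c)^+}\big)$; (4) case-split on which of $\max_i\Ex{X_i}$ versus $\tfrac12\Ex{\max_i(X_i-c)^+}$ is larger and check that in each case the ratio is at most $2$. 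As the proposition's statement hints (``whose proof follows from considering the mechanism in Theorem~\ref{thm:main}''), I would borrow the mechanism from Theorem~\ref{thm:main}'s identical-cost improvement rather than reinvent it; the content is the improved \emph{upper bound}, not a new algorithm.

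The main obstacle I anticipate is step~(2): turning the expectation-level additive bound $\max_i\Ex{X_i} + \Ex{\max_i(X_i-c)^+}$ into something that the $2$-approximate mechanism can match. The naive bound gives only a $3$-approximation (that is exactly Theorem~\ref{thm:costless-3pax}), so the uniform-cost structure has to be used to either (a) show the optimum never simultaneously benefits from the ``ex-ante, no-inspection'' term and the ``inspect-and-keep-the-surplus'' term on the same sample path, hence $\Ex{\Opt}\le\Ex{\max\big((X_{i^\star}-c)^+,\ X_{j^\star}\Ind{\text{no inspection}}\big)}$ which telescopes into $\le \max_i\Ex{X_i} + \Ex{\max_i(X_i-c)^+}$ \emph{with one of the two being redundant in the relevant regime}; or (b) directly bound $\Ex{\Opt}$ by $2\max\big(\max_i\Ex{X_i},\,\tfrac12\Ex{\max_i(X_i-c)^+}\big)$ via a more careful accounting of when inspection cost $c$ is actually paid. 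I expect the cleanest route is: when $\max_i\Ex{X_i}\ge\tfrac12\Ex{\max_i(X_i-c)^+}$, the bound $\Ex{\Opt}\le\max_i\Ex{X_i}+\Ex{\max_i(X_i-c)^+}\le 3\max_i\Ex{X_i}$ is too weak, so one must instead argue $\Ex{\Opt}\le 2\max_i\Ex{X_i}$ in this regime by showing the surplus term $\Ex{\max_i(X_i-c)^+}$ is itself at most $\max_i\Ex{X_i}$ here --- which is false in general --- so the real work is a tighter per-realization argument that the optimum's inspection-branch value plus its no-inspection-branch value, weighted by their probabilities, never exceeds twice the better of the two simple benchmarks. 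I would first check small discrete instances to pin down the exact form of the tight bound before committing to the general argument.
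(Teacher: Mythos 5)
You correctly identify that with identical costs $\bU_i = \bigvee_{j\in[n]}\bI_j$ collapses to ``some inspection happened,'' and you correctly sense that a dichotomy is needed, but your dichotomy is at the wrong level: you work realization-by-realization, and then you admit you cannot turn the resulting per-realization $\max$ into a $\max$ of expectations (you observe yourself that it ``telescopes into'' the sum bound, i.e.\ back to the $3$-approximation). The paper's argument applies the dichotomy to the \emph{structure of the optimal mechanism} $\Opt=(\bI_\Del,P,\Sig)$, not to sample paths. Either (i) there exists a signal $\sigma$ whose induced policy $\Sig(\sigma)$ selects some solution $i$ \emph{without ever inspecting}; then the worst-case agent is one whose utility function makes $i$ the uniquely most-preferred solution, so this agent sends $\sigma$ on every realization and the principal's worst-case value is exactly $\Ex{X_i}\le\max_j\Ex{X_j}$. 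Or (ii) no such signal exists, so \emph{every} branch of $\Opt$ inspects at least one box and pays $c$, giving $\Ex{\Opt}\le\Ex{\max_i X_i}-c = \Ex{\max_i X_i - c}$. In either case, the maximal mechanism's guarantee $\Ex{\Mec}\ge\max\bigl(\max_i\Ex{X_i},\ \tfrac12\Ex{\max_i(X_i-c)^+}\bigr)$ from Lemma~\ref{lm:simple} beats one of these two bounds within a factor of $2$ (case (i) even gives a $1$-approximation; case (ii) gives $2$ since $\Ex{\max_i X_i - c}\le\Ex{\max_i(X_i-c)^+}$).

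The crucial observation your proposal is missing is that when the mechanism offers a no-inspection acceptance path at all, the \emph{adversarial agent} can steer into it with probability $1$, so the mechanism never benefits from the inspection branch on any realization --- the two regimes of the sum in Lemma~\ref{lm:opt-fullub} are not simultaneously attainable against the worst-case agent. This is what your routes (a) and (b) are groping for, but it requires arguing about the agent's best response to a fixed mechanism, not about a sample-path dichotomy. Without that step, your plan cannot improve the additive $3$-approximation upper bound to the $\max$-form bound you need, and your own conclusion (``I would first check small discrete instances to pin down the exact form of the tight bound'') confirms that the proposal as written does not yet constitute a proof.
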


Finally, the following theorem complements the above results by proving that the approximation ratio of 3 is tight for our algorithm for costless delegation. 
\begin{theorem}\label{thm:tightness}
    There exists a problem instance of DCIC such that the mechanism described in Theorem~\ref{thm:costless-3pax} cannot have approximation factor better than $3+\eps$ for any $\eps > 0$.
\end{theorem}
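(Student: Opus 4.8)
The plan is to produce, for every $\eps>0$, an instance (with costless delegation) on which the optimal mechanism essentially realizes the full upper bound of Lemma~\ref{lm:opt-fullub}, namely $\max_{i}\Ex{X_i}+\Ex{\max_i (X_i-c_i)^+}$, while the maximal mechanism of Lemma~\ref{lm:simple} is held to $\max\bigl(\max_i \Ex{X_i},\,\tfrac12\Ex{\max_i (X_i-c_i)^+}\bigr)$. The worst case of the ratio of these two expressions is $3$, attained precisely when $\Ex{\max_i (X_i-c_i)^+}=2\max_i\Ex{X_i}$, so I would tune all parameters so that the instance sits exactly at that balance point; writing $r:=\max_i\Ex{X_i}$, the target is $\Ex{\Mec}\approx r$ and $\Ex{\Opt}\approx 3r$.

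Concretely the instance has three kinds of boxes. First, a \emph{safe} box $s$ with deterministic reward $X_s=r$ and prohibitive inspection cost $c_s\gg r$, so that $(X_s-c_s)^+=0$: $s$ contributes $r$ to $\max_i\Ex{X_i}$ but nothing to $\Ex{\max_i(X_i-c_i)^+}$, and it can be taken directly for $r$ as a no-overinspection selection. Second, a \emph{decoy} box whose net value $X-c$ equals a constant close to $r$ with high probability — this is the coordinate toward which the adversarial agent will always steer the proposal. Third, a \emph{jackpot} box with inspection cost $o(1)$ whose net value is $0$ except on a vanishingly rare event, on which it is enormous. The three distributions are scaled so that (a) each box has expected principal-reward at most $r$, so the ``$\argmax_i\Ex{X_i}$'' sub-mechanism collects exactly $r$; (b) $\Ex{\max_i(X_i-c_i)^+}=2r$, the decoy contributing $\approx r$ and the jackpot contributing $\approx r$ in expectation; and (c) the single threshold prescribed by Lemma~\ref{lm:delegation} (applied to the shifted instance $X_i'=(X_i-c_i)^+$) lands at the decoy's net-value level. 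One then argues, exactly as in the tightness of the single-threshold prophet inequality behind Theorem~\ref{thm:kk18}, that the adversary may place $D_Y$ so that the agent always proposes the decoy: on realizations where only the jackpot is huge it is the decoy that gets proposed (it is reliably eligible), so SPMI collects only $\approx r=\tfrac12\Ex{\max_i(X_i-c_i)^+}$. Hence $\Ex{\Mec}=\max(r,r)=r$ for the mechanism of Theorem~\ref{thm:costless-3pax}, and it remains to check that no other single threshold (and no choice in case of ties in the $\argmax$ step) does better.

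The crux — and the step I expect to be hardest — is to engineer the jackpot/decoy coordinates so that some (oblivious) mechanism attains $\approx 3r$, i.e.\ genuinely meets the bound of Lemma~\ref{lm:opt-fullub} up to $\eps$. The intended mechanism uses the free delegation to separate the \emph{common} realizations (the jackpot pays nothing, and the mechanism should simply take $s$ for $r$ via a no-overinspection selection) from the \emph{rare} realizations (it should spend the $o(1)$ inspection budget to grab the jackpot's net value), making this separation robust by \emph{re-inspecting} after the proposal instead of trusting it, so the adversary cannot divert it to the decoy nor to a worthless box; the delicate point is to arrange the distributions so that the value harvested on the rare event is disjoint from the ex-ante value of $s$ and therefore adds to it rather than overlapping with it — this disjointness is exactly what a careless construction fails to get, which is why the naive version of this instance only yields a ratio of $2$ (the ``$r$'' from $s$ and one of the ``$r$''s inside $\Ex{\max_i(X_i-c_i)^+}$ collapse onto the same selection). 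Once such a mechanism is exhibited and its robustness against $D_Y$ verified, $\Ex{\Opt}\ge(3-\eps)r=(3-\eps)\Ex{\Mec}$, and letting the jackpot value tend to infinity and its probability to zero (and the decoy's deviation from $r$ to zero) drives the ratio to $3$; a final sanity check confirms that the $D_Y$ used above is also (near) worst-case for $\Mec$, so the approximation factor of the mechanism of Theorem~\ref{thm:costless-3pax} is no better than $3$.
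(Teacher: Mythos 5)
There is a genuine gap, and it sits exactly where you flagged: exhibiting a mechanism that attains $\approx 3r$. Your sketched instance --- one \emph{jackpot} box, one \emph{decoy} box, one \emph{safe} box $s$ --- cannot reach $3r$ for the optimal. With one jackpot, the best any policy can do is: pay $o(1)$ to inspect the jackpot; on the vanishing event take its huge reward (expected contribution $\approx r$); otherwise fall back to either the decoy or $s$, which both yield $\approx r$ and fire on the \emph{same} (common) event. The decoy and $s$ overlap rather than stack, so the total is $\approx 2r$, not $3r$. The ``disjointness'' you correctly identify as the delicate point is not achieved by your construction; the decoy contributes to $\Ex{\max_i(X_i-c_i)^+}$ but not to the realized optimum. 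The paper's resolution is to use \emph{two} free-to-inspect jackpot boxes ($X_i = 1/\eps$ w.p.\ $\eps$, else $0$, $c_i=0$) plus the deterministic safe box (value $1$, cost $1$). Each jackpot contributes $\approx 1$ via a \emph{distinct} probability event (box~1 hits; box~1 misses but box~2 hits), and the safe box contributes $\approx 1$ on the residual event; these three really do add, giving $\Ex{\Opt} = 3 - 3\eps + \eps^2$, while $\max_i\Ex{X_i} = 1$ and $\frac12\Ex{\max_i(X_i-c_i)^+} = 1 - \eps/2 < 1$, so the maximal mechanism takes the uninspected box for $1$.

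A secondary issue: the adversarial-$D_Y$ machinery (steering the proposal to the decoy so SPMI realizes only $\approx r$) is unnecessary. The maximal mechanism of Theorem~\ref{thm:costless-3pax} decides between its two branches by comparing the two \emph{a priori} lower bounds $\max_i\Ex{X_i}$ and $\frac12\Ex{\max_i(X_i-c_i)^+}$, not the realized performance of SPMI against a worst-case agent. It therefore suffices to arrange $\max_i\Ex{X_i}$ to (weakly) exceed $\frac12\Ex{\max_i(X_i-c_i)^+}$; the agent's distribution never enters. In the paper's instance the optimal mechanism is in fact a pure direct-inspection policy and the approximating mechanism never delegates, so $D_Y$ plays no role at all. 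Dropping the decoy and the prophet-inequality-style adversary, and replacing them with a second jackpot, closes the gap and recovers the paper's proof.
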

\begin{proof}
    For a small value of $\eps>0$, consider the following instance with 3 solutions: The first two solutions have identical distributions, where $X_i=\nicefrac{1}{\eps}$ with probability $\eps$ and $X_i=0$ with probability $1-\eps$. The inspection cost for these solutions is $0$. The third solution has a deterministic utility of $1$, but its inspection cost is also $1$. One optimal algorithm in this instance inspects the first two solutions in order, selecting them if the utility is non-zero. If both solutions report a utility of zero, the third box is chosen without inspection. The expected utility of this algorithm is
    $$
    \eps\cdot\frac{1}{\eps} + (1-\eps)\cdot\eps\cdot\frac{1}{\eps} + (1-\eps)^2=3-3\eps+\eps^2.
    $$
    
    Now, our algorithm considers the two options of delegation or choosing the solution with maximum expected utility without inspection. For delegation, we look at the value   
    $$\frac{1}{2}\Ex{\max_{i\in[n]}{X_i-c_i}}=\frac{1}{2}\left(\eps\cdot\frac{1}{\eps} + (1-\eps)\cdot\eps\cdot\frac{1}{\eps}\right)=1-\frac{\eps}{2}.$$ 
    The solution with maximum expected utility has a utility of $1$ as all solutions have an expected utility of $1$. Since the maximum expected utility is higher than the utility bound for delegation, our algorithm will choose the solution with the maximum expected utility without inspection, achieving a utility of $1$. So, the approximation ratio of our algorithm cannot be better than $3-3\eps+\eps^2$. As $\eps$ can be arbitrarily small, the approximation ratio of $3$ is tight for our algorithm.
\end{proof}

\subsection{Constant factor approximation for general setting}\label{sec:general}
We finally present the following theorem in the costly delegation setting.
\begin{theorem}\label{thm:main}
    Let $\Mec_P$ be the optimal mechanism for the PNOI problem (without delegation).
    \begin{enumerate}
        \item If $c_{\Del} = \alpha \Ex{\max_{i \in[n]} (X_i -c_i)^+}$ for some $\alpha < 1/2$, then there exists a $(3-4\alpha)/(1-2\alpha)$ approximate mechanism.
        \item If $c_{\Del} \ge \Ex{\max_{i \in [n]} (X_i - c_i)^+} - 2\Ex{\Mec_P}$, then there exists a $2-\eps$ approximate mechanism for any $\eps > 0$.
    \end{enumerate}
\end{theorem}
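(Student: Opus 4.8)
The plan is to use a maximal mechanism that takes the best of two candidates: $\Mec_1$, which delegates (paying $c_{\Del}$) and runs the SPMI from Lemma~\ref{lm:delegation} on the proposed element, and $\Mec_2$, which does not delegate and instead runs the PTAS for PNOI of \cite{beyhaghi2023pandora,fu2023pandora}, so $\Ex{\Mec_2} \ge (1-\eps)\Ex{\Mec_P}$. The starting point is the decomposition of the optimal mechanism's utility. Writing $B = \Ex{\max_{i\in[n]}(X_i-c_i)^+}$, I would first argue that $\Ex{\Opt} \le \max_{i\in[n]}\Ex{X_i} + B - \bI_\Del^{\Opt} c_{\Del}$, i.e., the bound of Lemma~\ref{lm:opt-fullub} must be paid down by the delegation cost whenever $\Opt$ delegates. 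Since the PNOI optimum $\Mec_P$ is a feasible non-delegating mechanism, we also have $\Ex{\Opt} \ge \Ex{\Mec_P}$, and a bit of care shows $\max_{i\in[n]}\Ex{X_i} \le \Ex{\Mec_P} + $ (a controllable term), which lets me compare everything against $B$ and $\Ex{\Mec_P}$.

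For part (1), where $c_{\Del} = \alpha B$ with $\alpha < 1/2$: if $\Opt$ does not delegate, then $\Ex{\Opt} = \Ex{\Mec_P} \le \frac{1}{1-\eps}\Ex{\Mec_2}$ and we are done with ratio arbitrarily close to $1$. If $\Opt$ does delegate, then $\Ex{\Opt} \le \max_i\Ex{X_i} + B - \alpha B = \max_i\Ex{X_i} + (1-\alpha)B$. On the mechanism side, $\Mec_1$ delegates and, by Lemma~\ref{lm:delegation} applied to the shifted instance, yields at least $\tfrac12 B - c_{\Del} = (\tfrac12 - \alpha)B$; meanwhile $\Mec_2$ yields at least $(1-\eps)\max_i\Ex{X_i}$ by selecting the best ex-ante element without inspection (a valid PNOI strategy). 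Taking the better of the two and balancing the two sub-cases $\max_i\Ex{X_i} \ge (\tfrac12-\alpha)B$ versus $<$ gives, after routine algebra, the ratio $(3-4\alpha)/(1-2\alpha)$; the worst case is exactly where the two lower bounds coincide. For part (2), where $c_{\Del} \ge B - 2\Ex{\Mec_P}$: here the key observation is that delegation is so expensive that if $\Opt$ delegates, $\Ex{\Opt} \le \max_i\Ex{X_i} + B - c_{\Del} \le \max_i\Ex{X_i} + 2\Ex{\Mec_P} \le 3\Ex{\Mec_P} + (\text{slack})$, but more sharply one pushes this to $\Ex{\Opt} \le 2\Ex{\Mec_P}$ using $\max_i\Ex{X_i}\le \Ex{\Mec_P}$; and if $\Opt$ does not delegate, $\Ex{\Opt}=\Ex{\Mec_P}$. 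Either way $\Ex{\Opt} \le 2\Ex{\Mec_P} \le \tfrac{2}{1-\eps}\Ex{\Mec_2}$, giving the $2-\eps$ bound (rescaling $\eps$).

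The main obstacle I anticipate is the delegation-cost bookkeeping in Lemma~\ref{lm:opt-fullub}'s analogue: the clean inequality $\Ex{\Opt} \le \max_i\Ex{X_i} + B$ was proved for $c_{\Del}=0$, and I need its refinement $\Ex{\Opt} \le \max_i\Ex{X_i} + B - \bI_\Del^{\Opt} c_{\Del}$, which requires re-examining whether the adversarial-agent construction of Lemma~\ref{lm:ub-opt} still forces the same structure when the principal has the extra option of not delegating — it does, because the not-delegating branch is exactly a PNOI policy and is separately dominated by $\Ex{\Mec_P}$, so one handles the two branches of $\Opt$ independently rather than in a single inequality. The second delicate point is verifying $\max_{i\in[n]}\Ex{X_i} \le \Ex{\Mec_P}$: this should follow because selecting $\argmax_i \Ex{X_i}$ without any inspection is a legal PNOI policy with expected payoff exactly $\max_i\Ex{X_i}$ (no cost incurred), so it is at most the PNOI optimum. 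Finally, for the improved $2$-approximation under identical inspection costs $c_i \equiv c$ (which also yields Proposition~\ref{prop:identical} at $c_{\Del}=0$), I would note that with identical costs ``overinspection'' collapses so that $\bU_i = \bigvee_j \bI_j$, and the upper bound tightens to $\Ex{\Opt} \le \max(\,\max_i\Ex{X_i},\ B\,) + (\text{no double count})$, against which the same maximal mechanism is a $2$-approximation by the balancing argument of Lemma~\ref{lm:simple}.
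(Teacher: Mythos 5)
Your overall strategy matches the paper's: a maximal mechanism between the SPMI of Lemma~\ref{lm:delegation} and a PNOI PTAS, a branching on whether $\Opt$ delegates, and the refinement $\Ex{\Opt}\le\max_i\Ex{X_i}+B-c_{\Del}$ (writing $B=\Ex{\max_i(X_i-c_i)^+}$) of Lemma~\ref{lm:opt-fullub} in the delegation branch. Your observation that $\max_i\Ex{X_i}\le\Ex{\Mec_P}$, because selecting $\argmax_i\Ex{X_i}$ unopened is a feasible PNOI policy, plays the same role as the paper's ``$v_1\ge\max_i\Ex{X_i}$''; your Part~(1) split on $\max_i\Ex{X_i}$ vs.\ $(\tfrac12-\alpha)B$ is equivalent to the paper's split on $\Ex{\Mec_{P_\eps}}$ vs.\ $(\tfrac12-\alpha)B$ and yields the same ratio, up to a harmless $1/(1-\eps)$ factor.

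Part~(2), however, has a genuine gap. Under the stated hypothesis $c_{\Del}\ge B-2\Ex{\Mec_P}$ you only get $B-c_{\Del}\le 2\Ex{\Mec_P}$, and the chain $\Ex{\Opt}\le\max_i\Ex{X_i}+(B-c_{\Del})\le\Ex{\Mec_P}+2\Ex{\Mec_P}=3\Ex{\Mec_P}$ stops at a factor of $3$, not $2$. Your remark that ``one pushes this to $\Ex{\Opt}\le 2\Ex{\Mec_P}$ using $\max_i\Ex{X_i}\le\Ex{\Mec_P}$'' is not an additional improvement --- that inequality is already what produced the $3\Ex{\Mec_P}$, and nothing in your argument recovers the missing $\Ex{\Mec_P}$. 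The paper's own proof of Part~(2) quietly switches to the \emph{stronger} hypothesis $c_{\Del}\ge B-\Ex{\Mec_P}$ (equivalently $B-c_{\Del}\le\Ex{\Mec_P}$), under which one does get $\Ex{\Opt}\le 2\Ex{\Mec_P}\le\frac{2}{1-\eps}\Ex{\Mec_{P_\eps}}$. So there is a mismatch between the theorem's stated hypothesis and the paper's proof, and your attempt to obtain a genuine $2$-approximation under the weaker ``$-2\Ex{\Mec_P}$'' condition is exactly where the argument breaks. (As a smaller point, the factor produced by either route is $1+\frac{1}{1-\eps}=\frac{2-\eps}{1-\eps}>2$, i.e.\ $2+O(\eps)$; the ``$2-\eps$'' in the statement, and in your write-up, should read ``$2+\eps$''.)
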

\begin{proof}
    Consider a PTAS mechanism $P_\eps$ for the PNOI problem by~\cite{fu2023pandora} and~\cite{beyhaghi2023pandora} that $1-\eps$ approximates $\Mec_P$.
    
    We consider the following mechanism $\Mec$.
    \begin{enumerate}
        \item Compute  $v_1 = \Ex{\Mec_{P_\eps}}$ of running $P_\eps$ without any delegation.
        \item Compute $v_2 = \frac{1}{2}\Ex{\max_{i \in [n]} (X_i - c_i)^+ }- c_{\Del}$ for running SPMI given by Lemma~\ref{lm:delegation}.
        \item If $v_1 \geq v_2$, run the mechanism $\Mec_P$, and otherwise run the above SPMI.
    \end{enumerate}
    Due to the construction, our mechanism satisfies
    \begin{align}
        \Ex{\Mec} &\ge \max\left(\frac{1}{2}\Ex{\max_{i \in [n]} (X_i - c_i)^+}-c_{\Del}, \Ex{\Mec_{P_\eps}}\right)\label{eq:costly-mec}
    \end{align}

    If the optimal mechanism $\Opt$, does not delegate, we have that
    \begin{align*}
        (1-\eps) \Ex{\Opt} \le \Ex{\Mec_P} \le \Ex{\Mec},
    \end{align*}
    thus our mechanism is $1-\eps$ approximation.

    On the other hand, if the optimal mechanism chooses to delegate, we can subtract $c_{\Del}$ from the bound in Lemma \ref{lm:opt-fullub} to say
    \begin{align*}
        \Ex{\Opt} &\le \max_{i \in [n]}\Ex{X_i} + \Ex{\max_{i \in [n]} (X_i - c_i)^+} - c_\Del\\
        &= \max_{i \in [n]}\Ex{X_i} + \Ex{\max_{i \in [n]} (X_i - c_i)^+ - c_\Del}.
    \end{align*}

    It is obvious that $v_1 \ge \max_{i \in [n]}\Ex{X_i}$.\footnote{One can also consider a maximal mechanism that mixes any mechanism with closed box inspection in the beginning.}
    Now we consider the first scenario where $c_\Del =\alpha \Ex{\max_{i \in [n]} (X_i -c_i)^+}$.
    Then, from~\eqref{eq:costly-mec} we have
    \begin{align*}
        \Ex{\Mec} \ge \max\parans{(\frac{1}{2}-\alpha)\Ex{\max_{i \in [n]} (X_i -c_i)^+}, \Ex{\Mec_{P_\eps}}}.
    \end{align*}
    We have the following two cases:
    \begin{enumerate}
        \item $\Ex{\Mec_{P_\eps}} \le \frac{1}{2}\Ex{\max_{i \in [n]} (X_i -c_i)^+} - c_{\Del}\leq(\frac{1}{2}-\alpha)\Ex{\max_{i \in [n]} (X_i -c_i)^+}$.
        \item $\Ex{\Mec_{P_\eps}} > \frac{1}{2}\Ex{\max_{i \in [n]} (X_i -c_i)^+} - c_{\Del}>(\frac{1}{2}-\alpha)\Ex{\max_{i \in [n]} (X_i -c_i)^+}$.
    \end{enumerate}
    In the first case, we have
    \begin{align*}
        \Ex{\Opt} &\leq \max_{i \in [n]}\Ex{X_i} + \Ex{\max_{i \in [n]} (X_i - c_i)^+ - c_\Del}\\
        &\le \Ex{\Mec_{P_\eps}} + \Ex{\max_{i \in [n]} (X_i - c_i)^+ - c_\Del}\\
        &\le (\frac{1}{2}-\alpha)\Ex{\max_{i \in [n]} (X_i - c_i)^+} + \Ex{\max_{i \in [n]} (X_i - c_i)^+ - c_\Del}\\
        &\le (\frac{3}{2}-2\alpha) \Ex{\max_{i \in [n]} (X_i - c_i)^+}
        \\
        &\le \frac{\frac{3}{2}-2\alpha}{1/2-\alpha} \Ex{\Mec},
    \end{align*}
    implying that $\Mec$ is $(3-4\alpha)/({1-2\alpha})$-approximation.
    \\

    Otherwise, we have
    \begin{align*}
        \Ex{\Opt} &\leq \max_{i \in [n]}\Ex{X_i} + \Ex{\max_{i \in [n]} (X_i - c_i)^+ - c_\Del}\\
        &\le \Ex{\Mec_{P_\eps}} + \Ex{\max_{i \in [n]} (X_i - c_i)^+ - c_\Del}\\
        &\le \Ex{\Mec_{P_\eps}} + (1-\alpha)\Ex{\max_{i \in [n]} (X_i - c_i)^+}\\
        &\le (1 + \frac{1-\alpha}{\nicefrac{1}{2}-\alpha}) \Ex{\Mec_{P_\eps}}
        \\
        &\le \frac{\nicefrac{3}{2}-2\alpha}{\nicefrac{1}{2}-\alpha} \Ex{\Mec},
    \end{align*}
    implying that $\Mec$ is a $(3-4\alpha)/({1-2\alpha})$-approximation.
    Since in both cases $\Mec$ is a $(3-4\alpha)/({1-2\alpha})$ approximation, the mechanism is a $(3-4\alpha)/({1-2\alpha})$ approximation in the scenario where $c_{\Del}=\alpha\Ex{\max_{i \in[n]} (X_i -c_i)^+}$

    Now consider the second case where $c_{\Del} \ge \Ex{\max_{i \in [n]}(X_i - c_i)^+} - \Ex{\Mec_P}$.
    Then, we have
    \begin{align*}
        \Ex{\Mec_{P_\eps}} 
        \ge (1-\eps)\Ex{\Mec_P} 
        \ge (1-\eps)\parans{\Ex{\max_{i \in [n]}(X_i - c_i)^+} - c_{\Del}}.
    \end{align*}
    In this case, we have
    \begin{align*}
        \Ex{\Opt} 
        &\le \Ex{\Mec_{P_\eps}} + \Ex{\max_{i\in [n]} (X_i - c_i)^+ - c_{\Del}}
        \\
        &\le \parans{1 + \frac{1}{1-\eps}}\Ex{\Mec_{P_\eps}} 
        \\
        &\le \parans{1 + \frac{1}{1-\eps}}\Ex{\Mec},
    \end{align*}
    concluding that it is $(2-\eps')$-approximation for any $\eps' > 0$ by taking $\eps$ sufficiently small.

    
\end{proof}

\section{Concluding Remarks}
Our work inspires several interesting directions to explore. The most immediate direction would be to obtain improved approximate mechanism, possibly PTAS, in general setting.
We conjecture that DCIC might have strictly harder computational complexity than PNOI as the intrinsic difficulties in analyzing every class of signaling mechanisms imparts more technical challenges in the problem. 
Since randomized mechanism may change the agent's best-response, studying the impact of randomization would be also interesting.
It would be also interesting to investigate a mechanism that knows the agent's distribution a priori. 
Several other variants of the Pandora's box problem, \eg combinatorial valuation by~\cite{singla2018price}, when the recall is not possible~\citep{kleinberg2016descending}, and sample based mechanism or correlated setting by~\cite{chawla2020pandora} could analogously be studied in our setting when the side information from the strategic agent is possible.



\section*{Acknowledgements}
    This work is partially supported by DARPA QuICC, ONR MURI 2024 award on Algorithms, Learning, and Game Theory, Army-Research Laboratory (ARL) grant W911NF2410052, NSF AF:Small grants 2218678, 2114269, 2347322, and Royal Society grant IES\textbackslash R2\textbackslash 222170.

\bibliographystyle{plainnat}
\bibliography{ref}

\newpage
\appendix
\section{Remaining Proofs}
\subsection{Proof of Proposition~\ref{prop:inapx-firstbest}}
\begin{proof}[Proof of Proposition~\ref{prop:inapx-firstbest}]
    Consider $n$ ex-ante identical elements such that $X_i$'s are i.i.d. random variables from a distribution $D$ such that $X_i = 1/\eps$ with probability $\eps$ and otherwise $0$.
    Let each element $i$'s inspection cost be $c_i = c = (1- \eps)\cdot \Ex{\max_{i\in[n]}{X_i}}$.
    If the principal's mechanism inspects at least one element, her expected utility will be
    \begin{align*}
        \Ex{\Mec} = \Ex{\parans{\sum_{i \in [n]} \bA_i X_i - \bI_i c_i} - \bI_{\Del} c_{\Del}}
        \le
        \Ex{\max_{i \in [n]} X_i} - (1-\eps) \Ex{\max_{i \in [n]} X_i} 
        &= \eps \Ex{\max_{i \in [n]}X_i}
    \end{align*}
    On the other hand, the principal may select an element without inspecting any element.
    In this case, the utility will be at most
    \begin{align*}
        \Ex{\Mec} \le \max_{i \in [n]}\Ex{X_i} = \Ex{X_i} = 1.
    \end{align*}
    Note, however, that the first-best outcome in hindsight is $\Ex{\max_{i \in [n]} X_i} = (1-(1-\eps)^n) \cdot 1/\eps$.
    By plugging $\eps = 1/n$, we obtain
    \begin{align*}
        \Ex{\max_{i \in [n]} X_i} = (1-(1-1/n)^n) \cdot n \ge \left(1-\frac{1}{e}\right)n,
    \end{align*}
    where we use $(1-1/n)^n \le 1/e$.
    On the other hand, we know that
    \begin{align*}
        \Ex{\Mec} \le \max\left(1, \eps \Ex{\max_{i \in [n]} X_i}\right) \le  \max\left(1, \frac{1}{n} \cdot n\right) = 1,
    \end{align*}
    which completes the proof.
\end{proof}

\subsection{Proof of Proposition~\ref{prop:no-del-fails}}
\begin{proof}[Proof of Proposition~\ref{prop:no-del-fails}]
    Suppose we have $n$ identical solutions with the same utility distribution for the principal, where 
    each solution has utility $1$ with probability $p$ and $0$ otherwise.
    Additionally, suppose that the cost of inspecting each solution is $c$.

    The probability that at least one solution is realized to have utility $X_i=1$ is
    \begin{align*}
        \Pr{\max_{i\in[n]}X_i=1} = 1-(1-p)^n .
    \end{align*}

    Now, consider a SPMI with an eligibility threshold of $1/2 > 0$.
    Any rational agent will report a solution with non-zero utility if one exists since otherwise the utility will simply be zero.
    Therefore, the principal's expected utility for the delegation-only mechanism is
    \begin{align*}
        \Ex{\Mec} \ge 1-(1-p)^n - c \; .
    \end{align*}
    Thus $\Ex{\Opt} \ge \Ex{\Mec} = 1-(1-p)^n - c$.

    On the other hand, we consider a direct inspection policy.
    Since all the solutions have identical utilities for the principal along with identical costs, and since there is only one non-zero possible utility value, the number of solutions inspected is enough to describe all possible inspection policies. 
    So, for any number $0 \leq k \leq n$, we consider the expected utility of a policy that inspects up to $k$ solutions, selecting the first one that achieves a non-zero utility. If no inspected solution's utility is non-zero and $k<n$, the policy selects an uninspected solution with an expected utility of $p$.
    Now, we can provide an upper bound for the utility of such a mechanism. If $k \geq 1$, we can say
    \begin{align*}
        \Ex{\Mec} &\leq \sum_{i=1}^{k} p(1-p)^{i-1}(1 - ic) + (1-p)^k (p - kc)\\
        &=\sum_{i=1}^{k} p(1-p)^{i-1} - \sum_{i=1}^{k}(1-p)^{i-1} c + p(1-p)^k\\
        &=p\frac{1-(1-p)^k}{1-(1-p)} - c\frac{1-(1-p)^k}{1-(1-p)} + p(1-p)^k\\
        &=(1-(1-p)^k)(1-\frac{c}{p}) + p(1-p)^k\\
        &\leq (1-(1-p)^k)(1-\frac{c}{p}) + p
    \end{align*}
    Now, let $p=\nicefrac{1}{n}$, $c=\nicefrac{2}{n}$,   and assume $n\geq 6$. Then $1-\nicefrac{c}{p} = 1 - 2 \le 0$, so we have for any inspection only mechanism $\Mec$
    $$\Ex{\Mec} \leq p = \frac{1}{n}.$$
    On the other hand, we have
    $$\Ex{\Opt} \geq (1-(1-\frac{1}{n})^n) - \frac{2}{n} \geq 1 - 1/e - \frac{2}{n} \geq \frac{1}{6}.$$
    Therefore, no mechanism utilizing only inspection can achieve an approximation ratio better than $\nicefrac{n}{6}=\Omega(n)$.
    
\end{proof}

\subsection{Proof of Proposition~\ref{prop:spm-fails}}
\begin{proof}[Proof of Proposition~\ref{prop:spm-fails}]
    Consider any instance where the cost of each solution corresponds to its possible maximum utility for the principal. For example, we can assume an instance with solutions that have $X_i=1$ with probability $1/2$ and $0$ with probability $1/2$, and each solution's inspection cost is $1$. It is clear that any SPMI that always inspects the nonnull solution proposed will have a utility of at most $0$, as the inspection cost is no less than the utility of the solution. On the other hand, choosing a solution without inspection leads to a positive utility of $1/2$. So, any $\alpha$-approximation is impossible using only this class of mechanisms. 
\end{proof}

\subsection{Proof of Proposition~\ref{prop:info}}
\begin{proof}[Proof of Proposition~\ref{prop:info}]
    Consider a problem instance with $n$ identical solutions where each $X_i$ is $1$ with probability $\eps$ and $0$ otherwise. 
    Then, the expected utility of each solution is $\eps$.
    Now, consider a signaling mechanism with $\Sigma=[n]$, with each signal corresponding to a solution. For signal $i$, the principal inspects all solutions except solution $i$, and accepts solution $i$ if all other solutions have a utility of $0$, and does not accept any solution otherwise.
    Then, any rational agent will use signal $i$ if solution $i$ is the only solution with a non-zero utility for the principal.
    It can be seen that assuming inspection costs are equal to $0$, the utility gained from uninspected solutions in this scenario will be $n\eps(1-\eps)^{n-1}$. 
    Then, the ratio of utility from uninspected solutions to the maximum expected utility of any solution can be written as $\frac{n\eps(1-\eps)^{n-1}}{\eps}=n(1-\eps)^{n-1}$ which tends to $n$ as $\eps$ goes toward $0$. 
\end{proof}
\subsection{Proof of Proposition~\ref{prop:identical}}
\begin{proof}[Proof of Proposition~\ref{prop:identical}]
    Consider the optimal mechanism $\Opt=(\bI_\Del,P,\Sig)$ and our mechanism from Theorem \ref{thm:costless-3pax}. If for any signal $\sigma \in \Sigma$, inspection policy $\Sig(\sigma)$ selects a solution $i$ without inspecting any solution, an agent preferring $i$ to other solutions will always use this signal. Then, the principal's utility can be upper bounded by $\Ex{X_i} \leq \max_{j\in[n]} \Ex{X_j}$. 
    
    Otherwise, for any signal $\sigma$, the policy $\Sig(\sigma)$ inspects at least one solution at cost $c$. Therefore, the expected utility obtained by the principal can be upper bounded by $\Ex{\max_{i\in[n]} X_i - c}$. By Lemma \ref{lm:simple}, our mechanism achieves a utility of at least $\max(\max_{i\in[n]}\Ex{X_i},\frac{1}{2}\Ex{\max_{i\in[n]} X_i - c})$, which is at least a 2-approximation to the optimal in each case.
\end{proof}

\end{document}